\newtheorem{proposition}{Proposition}
\newtheorem{remark}{Remark}
\journal{Journal of Computational Physics}
\begin{document}

\begin{frontmatter}

\title{A Lagrangian method for solving the spherical\\shallow water equations using power diagrams}

\author[midd]{Philip Caplan}
\author[midd]{Otis Milliken}
\author[midd]{Toby Pouler}
\author[midd]{Zeyi Tong}
\author[midd]{Col McDermott}
\author[midd]{Sam Millay}
\affiliation[midd]{organization={Middlebury College, Department of Computer Science}, addressline={75 Shannon Street}, city={Middlebury}, state={VT}, postalcode={05753}, country={USA}}

%% Abstract
\begin{abstract}
Numerical simulations of the air in the atmosphere and water in the oceans are essential for numerical weather prediction.
The state-of-the-art for performing these fluid simulations relies on an Eulerian viewpoint, in which the fluid domain is discretized into a mesh, and the governing equations describe the fluid motion as it passes through each cell of the mesh.
However, it is unclear whether a Lagrangian viewpoint, in which the fluid is discretized by a collection of particles, can outperform Eulerian simulations in global atmospheric simulations.
To date, Lagrangian approaches have shown promise, but tend to produce smoother solutions.
In this work, a new Lagrangian method is developed to simulate the atmosphere in which particles are represented with spherical power cells.
We introduce an efficient algorithm for computing these cells which are then used to discretize the spherical shallow water equations.
Mass conservation is enforced by solving a semi-discrete optimal transport problem and a semi-implicit time stepping procedure is used to advance the solution in time.
We note that, in contrast to previous work, artificial viscosity is not needed to stabilize the simulation.
The performance of the spherical Voronoi diagram calculation is first assessed, which shows that spherical Voronoi diagrams of 100 million sites can be computed in under 2 minutes on a single machine.
The new simulation method is then evaluated on standard benchmark test cases, which shows that momentum and energy conservation of this new method is comparable to the latest Lagrangian approach for simulating the spherical shallow water equations.
\end{abstract}

%% Graphical abstract
% \begin{graphicalabstract}
% \includegraphics[width=\textwidth]{graphical-abstract.pdf}
% \end{graphicalabstract}

% %% Research highlights
% \begin{highlights}
% \item A fully Lagrangian approach for simulating the shallow water equations on the sphere is presented.
% \item The method uses a weighted Voronoi (power) diagram to represent the collection of particles and conserves mass by solving a semi-discrete optimal transport problem.
% \item The efficiency, accuracy and conservation properties of the method are evaluated on a subset of the Williamson test suite.
% \end{highlights}

%% Keywords
\begin{keyword}
    Voronoi diagram \sep power diagram \sep shallow water \sep semi-implicit \sep Lagrangian \sep semi-discrete optimal transport \sep nearest neighbors
\end{keyword}

\end{frontmatter}

%% Add \usepackage{lineno} before \begin{document} and uncomment 
%% following line to enable line numbers
%\linenumbers

\section{Introduction}

Numerical simulations of the fluid in the atmosphere and oceans are fundamental for numerical weather prediction (NWP) and global climate modeling.
These simulations traditionally inherit the so-called \emph{Eulerian} viewpoint in which the fluid is modeled from the perspective of an observer watching the fluid pass through fixed spatial regions in the domain of interest.
Frameworks such as the Model for Predication Across Scales (MPAS) of the Energy Exascale Earth System Model (E3SM) use a finite volume method to simulate the atmosphere~\cite{Skamarock_2012} and oceans~\cite{Ringler_2013}.
Other discretization methods, such as finite difference or finite element methods are also possible.
In fact, the ice sheet component of E3SM (MPAS-Albany Land Ice)~\cite{Hoffman_2018} uses a finite element discretization. 

Much effort has been put into making finite volume methods for geophysical fluids robust and accurate.
These methods have migrated from using structured grids on the sphere to using arbitrary unstructured polygonal grids and use Arakawa C-grid staggering to discretize the governing equations~\cite{Thuburn_2009,Ringler_2010}.
By using Voronoi diagrams, the TRiSK scheme benefits from the orthogonality property of Voronoi facets (with respect to neighboring site directions) since the C-grid staggering scheme stores velocity components that are perpendicular to the cell facets.
These methods have evolved from using centroidal Voronoi tessellations into using power diagrams, a generalization of Voronoi diagrams in which each site is equipped with a weight.
Power diagrams offer the ability to optimize cell metrics which facilitate the robustness of the solver~\cite{Engwirda_2021}.
The main challenge with these Eulerian schemes lies in the treatment of the advection term and the resulting discretizations are sensitive to cell shapes.
Another challenge lies in the fact that the accuracy of the simulation depends on how well the mesh resolves important flow features.
This is difficult to know beforehand when constructing an initial mesh, though mesh adaptation~\cite{Abdi_2024} offers a possible solution to this problem.

Alternatively, the fluid can be modeled from the \emph{Lagrangian} perspective, in which the fluid is represented as a collection of discrete particles.
It is unknown whether Lagrangian fluid simulations for geophysical applications are capable of outperforming Eulerian ones~\cite{Alam_2008, Capecelatro_2018} in terms of computational efficiency, robustness or accuracy.
In a Lagrangian framework, the governing equations for mass, momentum and energy conservation are described in terms of the material derivative, and these schemes are not as sensitive to the properties of a mesh (if one is used at all).

A popular technique for simulating fluids in a Lagrangian context is the Smoothed Particle Hydrodynamics (SPH) approach, which was initially introduced in 1977 to solve problems in astrophysics~\cite{Gingold_1977}.
SPH has also been used to solve the shallow water equations (SWE), which is a common set of equations when prototyping discretization schemes for atmospheric modeling.
However, SWE-SPH development has primarily focused on realism for computer graphics applications~\cite{Lee_2010} with little attention to the accuracy of the simulations.
Capecelatro introduced an SPH approach for solving the SWE on the sphere~\cite{Capecelatro_2018}, and demonstrated the approach on the commonly used Williamson test suite.
Capacelatro's method is promising, but the robustness of the method appears to be dependent on a parameter embedded within the SPH kernel and the results seem smoother than what others have reported with Eulerian methods.
Furthermore, while an SPH-based approach may appear suitable for atmospheric simulations on the entire (boundary-less) sphere, it would suffer the same difficulties SPH-based methods exhibit when applying the boundary conditions needed to simulate the oceans.

Another option is to represent particles with discrete regions that cover the sphere.
This concept was pioneered by Augenbaum~\cite{Augenbaum_1982,Augenbaum_1984,Augenbaum_1985} in 1982.
One advantage of Augenbaum's work (which is also an important feature of Capecelatro's work) is the fact that the three-dimensional shallow water equations are solved instead of solving a two-dimensional form of the equations in a latitude-longitude coordinate system, thereby avoiding the singularities at the poles.

Voronoi diagrams have recently been used in two dimensions to solve the incompressible~\cite{Duque_2023,Kincl_2024} and compressible~\cite{Kincl_2025} Navier-Stokes equations.
These methods were first proposed by Springel in 2010 and often use a correction procedure (within each time step) to return the particle positions back to their centroids.
These methods naturally conserve mass, and the continuity equation is only needed when simulating compressible flows.
The local mass of each cell can also be conserved by solving a semi-discrete optimal transport problem~\cite{Levy_2015,Levy_2018}, which has also been demonstrated for solving the incompressible Euler equations~\cite{GM_2018}, animating fluids in three dimensions~\cite{Levy_2024, deGoes_2015} and also for reconstructions of the early Universe~\cite{Levy_2021}.

The primary goal of this work is to develop and investigate a new Lagrangian technique for simulating the atmosphere, with the intention of eventually extending this framework to simulate the oceans.
The proposed approach follows from the idea that Voronoi cell masses can be controlled by solving a semi-discrete optimal transport problem.
The use of a Voronoi diagram (more generally, a power diagram) enables the calculation of first-order accurate differential operators needed to discretize the spherical shallow water equations.
A semi-implicit forward Euler method is used to integrate the equations in time, and no artificial viscosity is needed to stabilize the solution.
After describing the components of the new scheme, the method is then evaluated on the standard Williamson test suite.
\section{Methods}
The constrained form of the shallow water equations on a sphere with radius $a$, $\mathbb{S}^2 = \{ \vec{r} = (x, y, z) \in \mathbb{R}^3\,\colon\,\lVert \vec{r} \rVert = a\}$, rotating with an angular velocity of $\vec{\Omega}$ are
\begin{align}
    \frac{\mathrm{d}h}{\mathrm{d}t} &= -\nabla_s\cdot\vec{u}, \label{eq:swm} \\
    \frac{\mathrm{d}\vec{u}}{\mathrm{d}t} &= -g\nabla_s(h + h_s) - f\vec{r}\times\vec{u} - \lVert \vec{u} \rVert^2 \vec{r}, \label{eq:swp}
\end{align}
where $h$ is the depth of the fluid above the surface with height $h_s$ (relative to the surface of the sphere).
The velocity of the fluid $\vec{u}$ is the time derivative of the position of a fluid particle $\vec{r}$.
The Coriolis parameter is denoted by $f$ and will be specified for each problem setup.
The last term in equation \eqref{eq:swp} is a result of constraining particles to remain on the sphere which is attributed to Cot\'e~\cite{Cote_1988} and Augenbaum~\cite{Augenbaum_1982}.
The differential operator on the surface of the sphere is represented by $\nabla_s$.

Eq.~\eqref{eq:swm} is derived from the continuity equation whereas Eq.~\eqref{eq:swp} is derived by conserving momentum.
The left-hand sides of these equations contain the material derivative $\frac{\mathrm{d}}{\mathrm{d}t} = \frac{\partial}{\partial t} + \vec{u}\cdot\nabla_s$.

We will discretize the fluid into a set of $n$ particles associated with areas covering $\mathbb{S}^2$.
Each particle will have a position $\vec{p}_i$, velocity $\vec{u}_i$, and height $h_i$. At the onset of a simulation, these quantities will be explicitly prescribed, and will then evolve according to the discretization of Eqs.~\eqref{eq:swm} and \eqref{eq:swp}.
\subsection{Spherical Voronoi and power diagrams}
We choose to represent the region associated with each particle as a Voronoi cell.
The advantage of this approach is that the collection of Voronoi cells covers $\mathbb{S}^2$ and, as we will see, the cell areas can be controlled to conserve mass.

Specifically, the particle positions initially correspond to the $n$ site locations $\{ \vec{p}_i \in \mathbb{S}^2 \}$ that define a Voronoi diagram on the sphere.
A spherical Voronoi diagram is the collection of spherical Voronoi cells $\{\mathbb{V}_i \subseteq \mathbb{S}^2\}$ which partition the sphere $\mathbb{S}^2$ such that each cell $\mathbb{V}_i$ is the set of all points on $\mathbb{S}^2$ closest to site $\vec{p}_i$:
\begin{align}%
    \label{eq:voronoi-cell}%
    \mathbb{V}_i = \{ \vec{r} = (x, y, z)\in\mathbb{S}^2\,\colon\,\lVert\vec{p}_i - \vec{r}\rVert < \lVert \vec{p}_j - \vec{r}\rVert,\ \forall j \neq i\}.
\end{align}
Each spherical Voronoi cell $\mathbb{V}_i$ is a polygon with vertices on the unit sphere.
We will consider a generalization of Voronoi diagrams in which a weight is assigned to each site so as to offset the distance calculation:
\begin{align}%
    \label{eq:power-cell}%
    \mathbb{P}_i = \{ \vec{r} = (x,y,z) \in\mathbb{S}^2\,\colon\,\lVert\vec{p}_i - \vec{r}\rVert^2 - w_i < \lVert \vec{p}_j - \vec{r}\rVert^2 - w_j,\ \forall j \neq i\}.
\end{align}
In this general setting, $\mathbb{P}_i$ is a spherical power cell, which also has vertices on $\mathbb{S}^2$.
Geometrically, the weights can be interpreted as an offset of the distance calculation such that it is measured from a sphere of radius $\sqrt{w_i}$ centered on $\vec{p}_i$.
When all the weights are identical, a Voronoi diagram is obtained and the distances are measured directly from the site locations.
\subsubsection{Calculating power cells via halfspace clipping}
Several approaches for calculating spherical Voronoi diagrams and Delaunay triangulations have been proposed.
For example, \texttt{STRIPACK}~\cite{Renka_1997} constructs the dual Delaunay triangulation incrementally, using edge swaps to satisfy the Delaunay property after every insertion.
Caroli et el.~\cite{Caroli_2010} also use an incremental algorithm to construct the Delaunay triangulation, however, they exploit the star-shaped property of the polygonal cavity with respect to each inserted vertex to create new Delaunay triangles.
Sugihara~\cite{Sugihara_2002} computes the spherical power diagram by projecting the edges of a three-dimensional convex hull onto a unit sphere.
Jacobsen et al.~\cite{Jacobsen_2011} construct the spherical Delaunay triangulation using a combination of stereographic projections and a planar Delaunay triangulation procedure which is parallelized using the Message Passing Interface (MPI).
All of these methods are \emph{indirect} in the sense that they either rely on the dual Delaunay triangulation or a convex hull.
This makes it difficult to parallelize such approaches because the topology of the triangulation needs to be consistent where different threads insert vertices.

In our proposed particle-based approach, we only need geometric information about the Voronoi or power cells.
Thus, we have developed a \emph{direct} approach to calculate the geometry of these cells which lends to an easily parallizable implementation.
Our method resembles the halfspace clipping procedure used by \texttt{geogram}~\cite{Levy_2013} and \texttt{Voro++}~\cite{Rycroft_2009}, which we extend to the case of a sphere.

The central concept of this approach is that Voronoi and power cells can be computed as the intersection of all halfspaces defined between the corresponding site $\vec{p}_i$ (with weight $w_i$) and every other site $\vec{p}_j$ (with weight $w_j$).
Here, we will focus on the case of power cells since a Voronoi cell can be obtained by simply setting all weights to the same value.
Each halfspace $\mathcal{H}_{ij}^+$ is defined as the region on the positive side of the bisecting plane (called a \emph{bisector}) between $\vec{p}_i$ and $\vec{p}_j$.
Thus each power cell is defined by
\begin{align}
    \label{eq:power-cell-halfspaces}%
    \mathbb{P}_i = \bigcap\limits_{j \neq i}\ \mathcal{H}^+_{ij}, \quad\mathcal{H}_{ij}^+ = \{ \vec{r} \in \mathbb{S}^2\,\colon\,(\vec{r} - \vec{b}_{ij})\cdot \vec{n}_{ij} > 0 \},
\end{align}
where $\vec{n}_{ij} = \vec{p}_i - \vec{p}_j$ and $\vec{b}_{ij} = \frac{1}{2}(\vec{p}_i + \vec{p}_j) + \frac{1}{2}(w_i - w_j) \frac{(\vec{p}_j - \vec{p}_i)}{\lVert\vec{p}_j - \vec{p}_i\rVert^2}$.
This definition directly leads to an implementable algorithm since we can calculate the intersection between all halfspaces to calculate $\mathbb{P}_i$.
Furthermore, each $\mathbb{P}_i$ can be calculated independently of other cells, which lends to a highly parallelizable implementation.

However, it would be inefficient to calculate all such intersections.
In practice, halfspaces (and bisectors) can be classified as \emph{contributing} or \emph{non-contributing}, based on the radius of security theorem~\cite{Levy_2013}.
This theorem applies to sites lifted to $\mathbb{R}^4$, so let $w_{\max}\ge\max\{w_i\}$ and denote the set of lifted sites as $\{\vec{q}_i = (x_i, w_i, z_i, \sqrt{w_{\max} - w_i}) \in \mathbb{R}^4\}$ (for the original sites $\{\vec{p}_i = (x_i, y_i, z_i) \in \mathbb{R}^3\}$).
Furthermore let $\vec{r}_\ell \in \mathbb{R}^4$ be a point $\vec{r} \in \mathbb{S}^2$ which has been lifted to $\mathbb{R}^4$ such that the fourth coordinate of $\vec{r}_\ell$ is zero. A halfspace $\mathcal{H}_{ij}^+$ (and its corresponding bisecting plane) does not contribute to the power cell $\mathbb{P}_i$ if
\begin{align}
    \label{eq:radius-of-security}
    4\max\limits_{\vec{r}_\ell \in \mathbb{P}_i} \lVert \vec{p}_i - \vec{r}\rVert^2 + 4(w_{\max} - w_i) < \lVert \vec{q}_i - \vec{q}_j\rVert^2.
\end{align}
The left-hand side of this expression is a result of simplifying $\lVert\vec{q}_i - \vec{r}_\ell \rVert^2$. As soon as this condition is satisfied, clipping can be terminated and the power cell is complete.
To reach this condition, halfspaces can be traversed in order of increasing distance $\lVert\vec{q}_i - \vec{q}_j\rVert$, so the nearest neighbors of $\vec{q}_i$ are needed. 
\begin{remark}
  \label{rem:voronoi-equivalence}
  The sites do not need to be explicitly lifted to $\mathbb{R}^4$ in order to compute the power cells. Lifting the sites has the advantage that standard nearest neighbor algorithms (e.g. based on a kd-tree) can be used.
  Furthermore, the radius of security should be directly applied in the lifted space, though Eq.~\ref{eq:radius-of-security} can be simplified to only contain terms involving the site coordinates (in $\mathbb{R}^3$) and weights.
  Details about the connection between power cells in $\mathbb{R}^3$ and restricted Voronoi cells in $\mathbb{R}^4$ are provided in \ref{app:lifting}.
\end{remark}

Each power cell is initialized to a large spherical square centered on the site as shown in the leftmost diagram of Fig.~\ref{fig:voronoi-clipping}.
This figure also shows the sequence of steps (top-to-bottom, left-to-right) of this clipping procedure applied to site $\vec{p}_i$.
These steps are illustrated in a two-dimensional square which should be interpreted as an overhead view of the clipping procedure applied to the initial spherical square.
Without loss of generality, weights are omitted for simplicity.

Neighboring sites are traversed in order of increasing distance from the lifted coordinates of $\vec{p}_i$.
At every step of the algorithm, the bisector between $\vec{p}_i$ (with weight $w_i$) and $\vec{p}_j$ (with weight $w_j$) is defined using Eq.~\ref{eq:power-cell-halfspaces}, producing the bisecting plane defined by the normal vector $\vec{n}_{ij}$ and point $\vec{b}_{ij}$.
This plane is then intersected with the current Voronoi cell by identifying whether any edges between adjacent vertices are on opposite sides of the bisecting plane. This classification is performed by evaluating the sign of $\vec{n}_{ij}\cdot(\vec{r}_k - \vec{b}_{ij})$.
If the signs differ, there is an intersection and the list of vertices is updated.

This approach requires calculating the coordinates of any individual $\vec{r}_k \in \mathbb{P}_i$.
In our implementation, we do not directly store the vertex coordinates since it would be wasteful to store both the cell coordinates and the equations of the planes which define the bisectors.
We only store four floating-point values that define each plane (i.e. $\vec{n}_{ij}$ and $-\vec{n}_{ij}\cdot\vec{b}_{ij}$) as well as an unsigned 8-bit integer at each vertex.
The latter is an index into the array of bisecting plane equations and corresponds to the adjacent bisector in the counterclockwise direction around the cell vertices.
Only this bisector index is needed because the adjacent bisector in the clockwise direction can be retrieved from the previous vertex in the list.

Each vertex $\vec{r}_k$ is adjacent to two bisectors from the halfspaces $\mathcal{H}_k$ and $\mathcal{H}_{k - 1}$, which can be retrieved using the index mentioned above.
Let these halfspaces be defined by the pairs ($\vec{n}_k$, $\vec{b}_k$) and ($\vec{n}_{k -1}$, $\vec{b}_{k - 1}$).
To compute the coordinates of $\vec{r}_k$, we first compute the line of intersection $\mathcal{L}(t) = \vec{l}_0 + \vec{l}\,t$ with
\begin{align*}
    \vec{l} = \frac{\vec{n}_k\times\vec{n}_{k-1}}{\vec{n}_k\times\vec{n}_{k-1}},\quad \vec{l}_0 = (\vec{b}_k\cdot\vec{n}_k)(\vec{n}_{k-1}\times\vec{l}) - (\vec{n}_{k - 1}\cdot\vec{b}_{k-1})(\vec{n}_{k}\times\vec{l}).
\end{align*}
This line is then intersected with the sphere centered at the origin (with radius $a$) to obtain two possible intersection points $\vec{l}_0 + (-\beta \pm \sqrt{\beta^2 - \gamma})\,\vec{l}$, where $\beta = \vec{l}\cdot\vec{l}_0$ and $\gamma = \lVert\vec{l}_0\rVert^2 - a^2$. The intersection point closest to $\vec{p}_i$ is used for $\vec{r}_k$.

This calculation is performed whenever the coordinates of the cell are needed, for example, when checking if the radius of security has been reached, or when computing geometric properties of the cell such as the area or moment.
Aside from requiring less memory than explicitly storing the cell vertex coordinates, another advantage to only storing (1) the plane definitions and (2) a counterclockwise plane index (for every vertex) is that topological information about the Voronoi diagram can be readily extracted.
This information is necessary for identifying the sites on opposite sides of a bisector, or for extracting the dual Delaunay triangulation, if necessary.
\begin{figure}
    \centering
    \begin{minipage}{0.24\textwidth}
    \includegraphics[width=\textwidth]{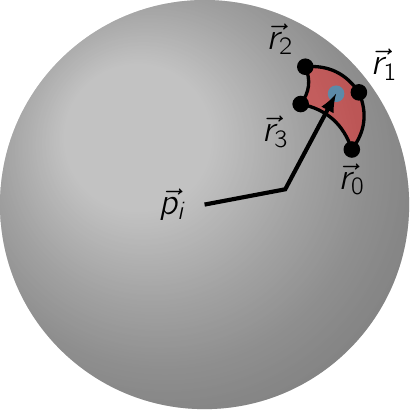}
    \end{minipage}
    \begin{minipage}{0.72\textwidth}
    \includegraphics[width=0.3\textwidth]{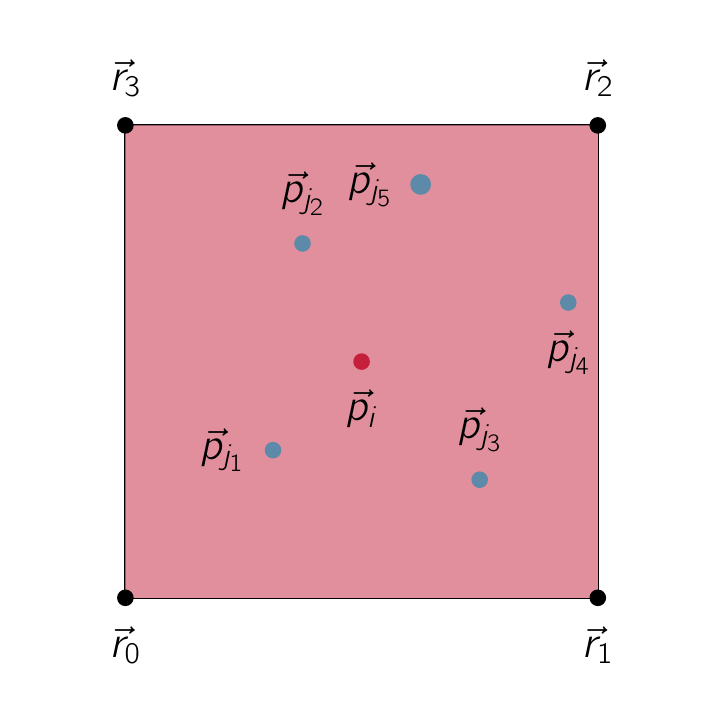}
    \includegraphics[width=0.3\textwidth]{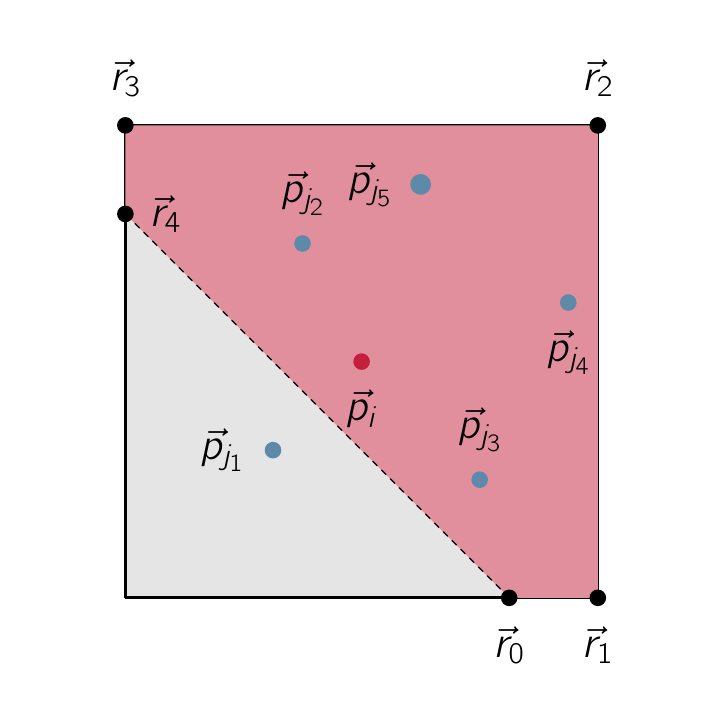}
    \includegraphics[width=0.3\textwidth]{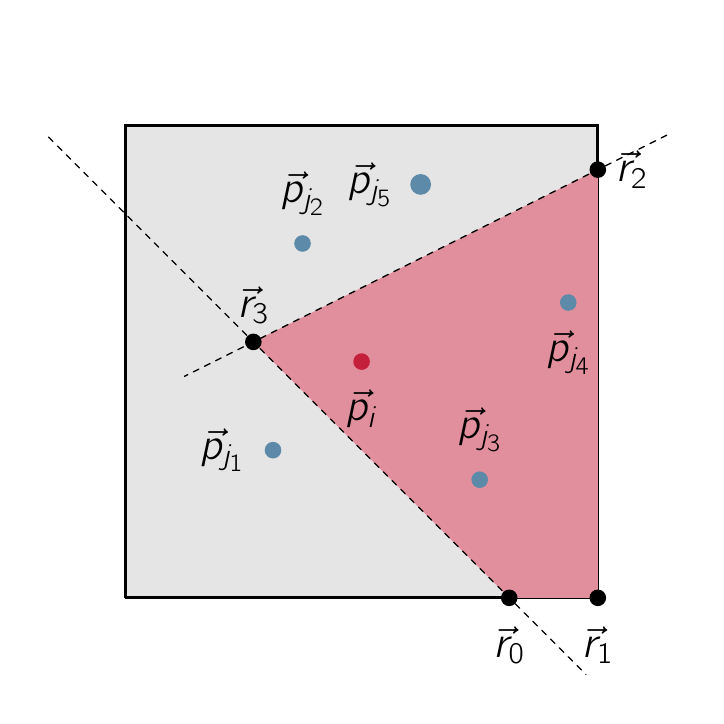} \\
    \includegraphics[width=0.3\textwidth]{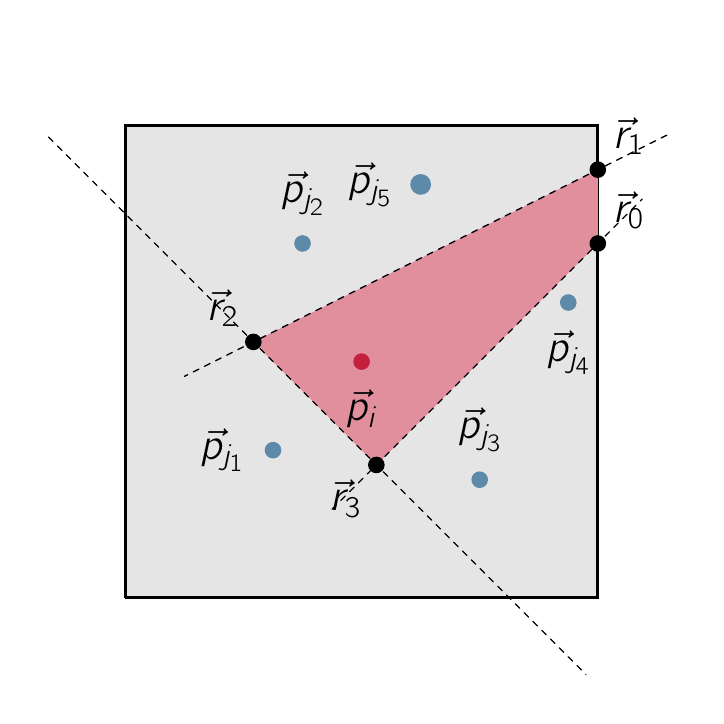}
    \includegraphics[width=0.3\textwidth]{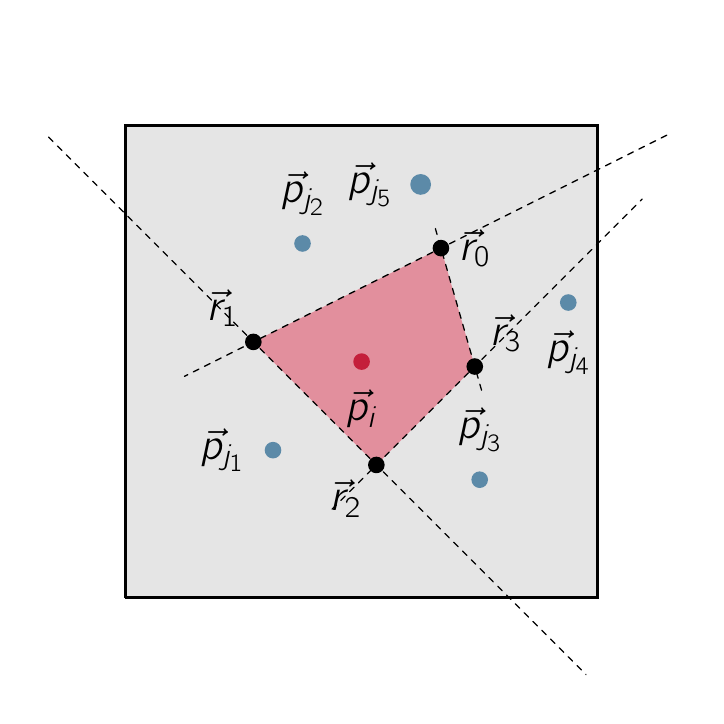}
    \includegraphics[width=0.3\textwidth]{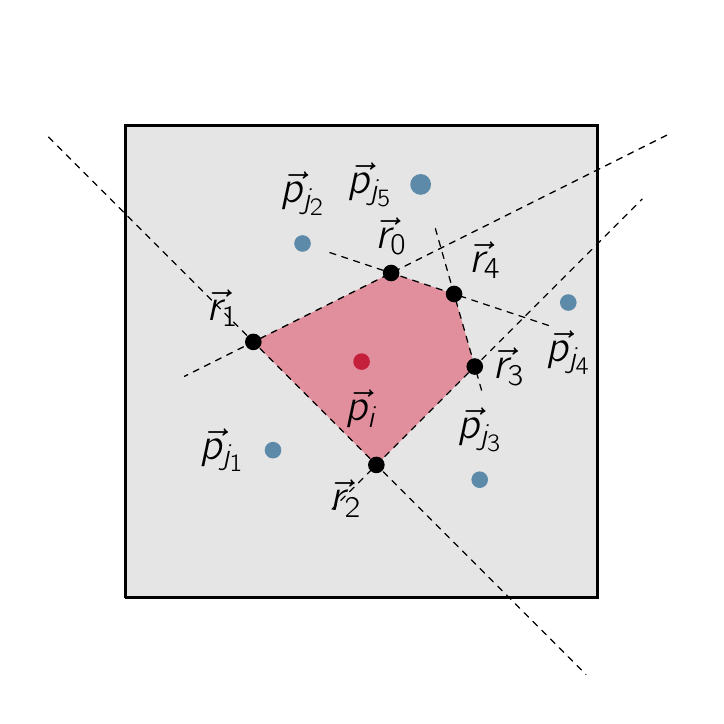}
    \end{minipage}
    \caption{The algorithm for computing an individual power cell (weighted Voronoi cell) starts by initializing the cell to a spherical square centered on the corresponding site.
        From left to right and top-to bottom: the cell is computed by iteratively clipping the current cell with the nearest neighbors of the site until the radius of security is reached.
    }
    \label{fig:voronoi-clipping}
\end{figure}
\subsubsection{Calculating nearest neighbors with a spherical quadtree}
As mentioned in the previous section, the approach used to calculate Voronoi and power diagrams requires calculating the nearest neighboring sites for every site.
A standard approach for calculating these nearest neighbors consists of building a kd-tree and then traversing the resulting kd-tree to retrieve the $k$ neighbors closest to the query site.
However, such an approach is less optimal for the case in which points are distributed close to a sphere~\cite{Caplan_2025}.

As a result, we use a spherical quadtree to accelerate nearest neighbor queries, inspired by the Hierarchical Triangular Mesh~\cite{Kunszt_2003}.
The initial preprocessing step of this approach consists of building a series of successive subdivisions of an octahedron.
\begin{figure}
    \centering
    \includegraphics[width=\textwidth]{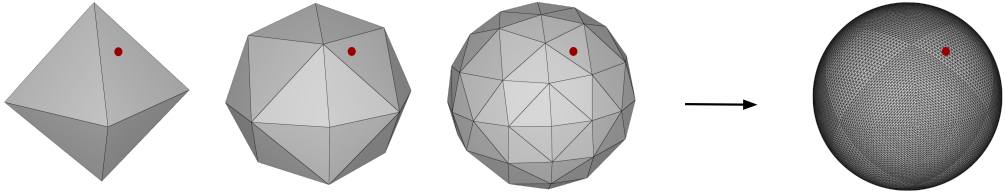}
    \caption{At the onset of a simulation, a spherical quadtree is built by subdividing an octahedron until there are about 20 - 50 sites in each triangle at the finest subdivision.
        When nearest neighbors are needed, this spherical quadtree is then traversed to find the enclosing triangle (in the finest subdivision) of each point.
    }
    \label{fig:sqtree-build}
\end{figure}
During the nearest neighbor query step, each site is located within a triangle of every subdivision by effectively searching through a quadtree defined by the uniform subdivisions of the original 8 triangles (Fig.~\ref{fig:sqtree-build}).
Fig.~\ref{fig:sqtree-locate} shows a site (the red dot) located within the blue triangle at the finest subdivision.
To find the nearest neighbors of this site (in red), all points in triangles sharing a vertex with the blue triangle are accumulated and then sorted.
We have empirically determined that about $20 - 50$ sites per triangle at the finest subdivision level is efficient, so the number of octahedron subdivisions is determined accordingly.
\begin{figure}
    \centering
    \includegraphics[width=\textwidth]{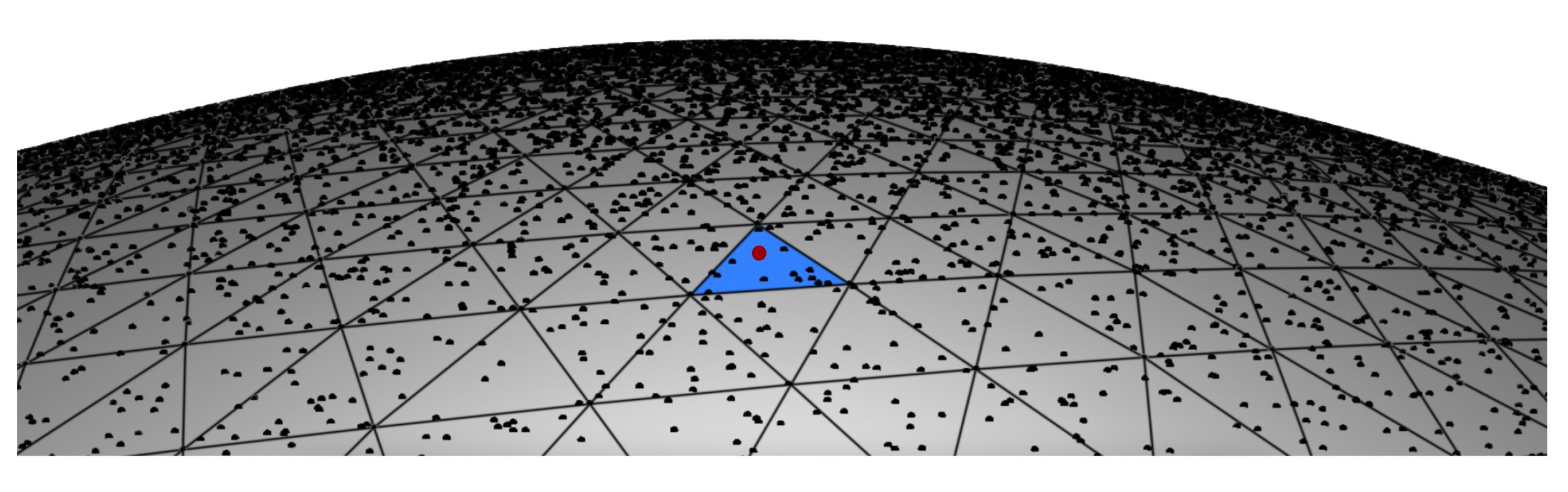}
    \caption{To retrieve the nearest neighbors of a given site (in red), all points in its enclosing triangle (blue), as well as those in triangles sharing a vertex with the blue triangle are accumulated and sorted.}
    \label{fig:sqtree-locate}
\end{figure}
\subsubsection{Performance of the Voronoi diagram calculation}
The performance of the power diagram calculation will now be assessed and compared with existing techniques.
Since other codes only support the calculation of Voronoi diagrams, all weights are set to zero in our analysis.
In general, the weight differences between neighboring sites will be small, so the performance of the power diagram calculation should be similar.

We compare our implementation with \texttt{scipy.spatial.SphericalVoronoi}~\cite{Caroli_2010} and \texttt{MPI-SCVT}~\cite{Jacobsen_mpiscvt}.
All codes were run on a workstation laptop containing an Intel Core Ultra 9 185H chip with 22 threads and 64GB of RAM.
Various point distributions were studied: (1) randomly sampled points from the surface of the sphere, and (2) sites defining a centroidal spherical Voronoi tessellation obtained after 50 iterations of Lloyd relaxation (starting from a random point distribution).
Site coordinates were then re-ordered along the $Z$-curve and, for consistency, the same set of sites was passed to the other algorithms studied here.
For a particular point set, each algorithm was run 10 times and the runtimes were averaged.
The results of the experiments are shown in Fig.~\ref{fig:voronoi-comparison}a for the random point distributions and Fig.~\ref{fig:voronoi-comparison}b for the centroidal point distributions.
The minimum and maximum of the 10 samples are highlighted in the shaded regions and the solid lines depict the average runtimes.
For both types of point distributions, our method is faster than existing algorithms for computing spherical Voronoi diagrams.

Using the same machine described above, we then evaluated the performance of our method as the number of sites grows to over 100 million (M) sites.
In this case, the point distributions are (1) randomly sampled from the surface of the sphere (random) and (2) obtained from the vertices of a uniformly subdivided icosahedron (uniform).
In the latter case, the number of sites is $2 + 10 \cdot 4^s$ where $s$ is the number of icosahedron subdivisions, giving 167,772,162 sites for $s = 12$.
The results of this study are shown in Fig.~\ref{fig:voronoi-comparison}c which compares the time to compute nearest neighbors with the total time to compute the Voronoi diagram (i.e. to compute nearest neighbors \emph{and} to clip the Voronoi cells).
These results highlight that the nearest neighbor calculation is the dominant cost in the Voronoi diagram calculation.
Furthermore, the irregularity in the runtime between 2M and 10M sites may be attributed to the different leaf sizes in the spherical quadtree (i.e. the number of vertices per triangle at the finest subdivision).
The runtime trends are smoother when there are exactly 20 points per triangle at the finest spherical quadtree subdivision, which is the case for the uniform point distributions.
Future work may consist of improving the heuristic used to pre-calculate the number of spherical quadtree subdivisions.
Nonetheless, our algorithm appears to scale linearly with the number of sites, and a spherical Voronoi diagram with 100 million randomly-distributed sites was computed in under 2 minutes. 
\begin{figure}
    \centering
    \begin{subfigure}[t]{0.32\textwidth}
        \begin{tikzpicture}
            \node at (0, 0) {\includegraphics[width=\textwidth]{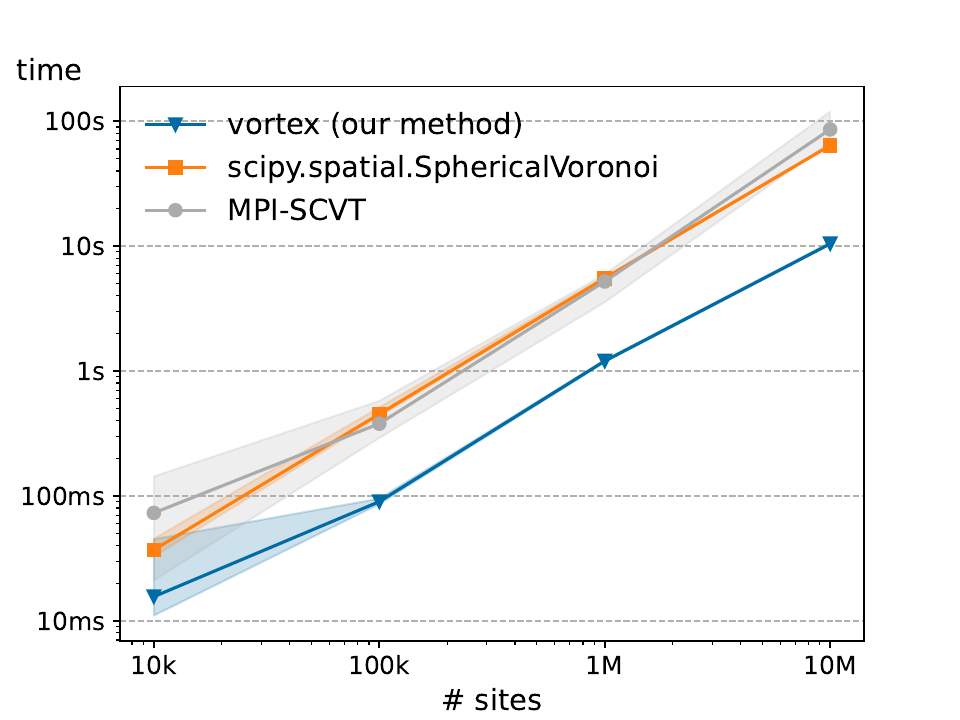}};
            \node at (1.125cm, -0.55cm){\includegraphics[width=0.225\textwidth]{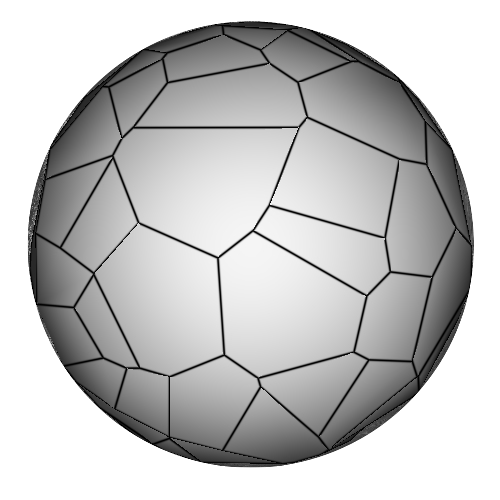}};
        \end{tikzpicture}
        \caption{Random.}
    \end{subfigure}
    \begin{subfigure}[t]{0.32\textwidth}
        \begin{tikzpicture}
            \node at (0, 0) {\includegraphics[width=\textwidth]{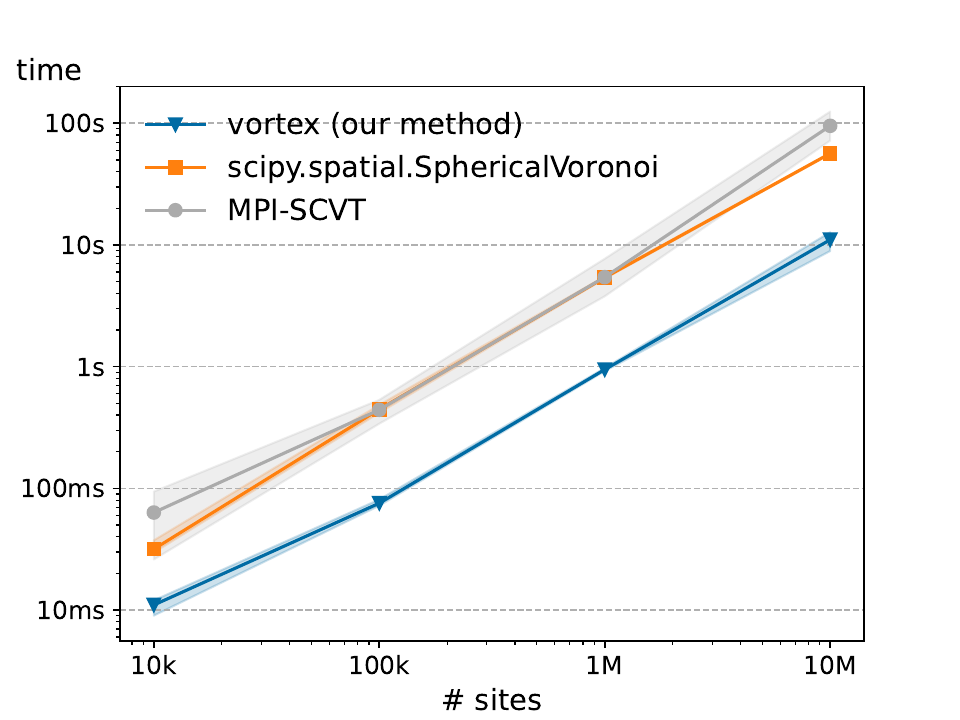}};
            \node at (1.125cm, -0.55cm){\includegraphics[width=0.225\textwidth]{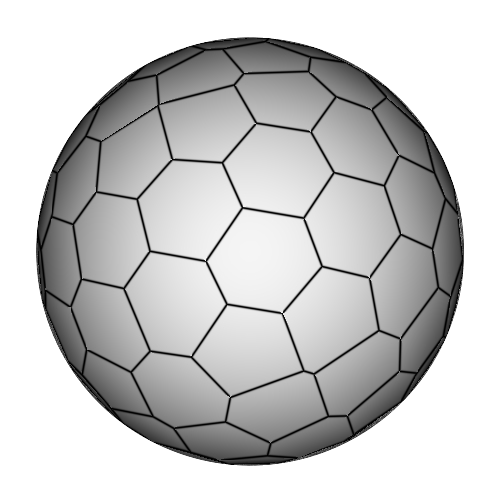}};
        \end{tikzpicture}
        \caption{Centroidal.}
    \end{subfigure}
    \begin{subfigure}[t]{0.32\textwidth}
        \begin{tikzpicture}
            \node at (0, 0) {\includegraphics[width=\textwidth]{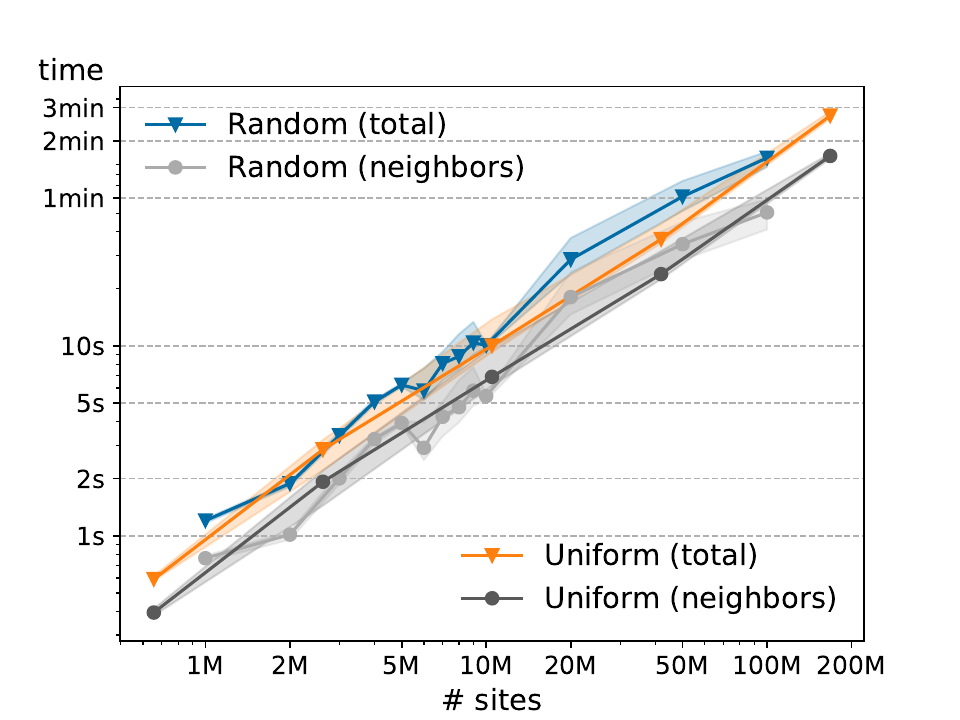}};
            \node at (-1.125cm, 0.25cm){\includegraphics[width=0.15\textwidth]{random.png}};
            \node at (1.125cm, -0.25cm){\includegraphics[width=0.15\textwidth]{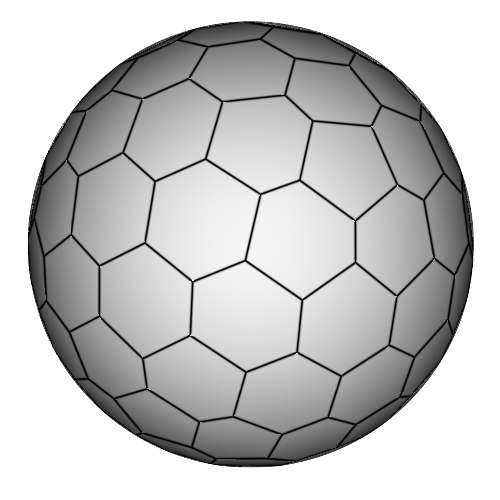}};
        \end{tikzpicture}
        \caption{Performance scaling of our method.}
    \end{subfigure}
    \caption{Comparison of the performance of our method (\texttt{vortex}) with existing techniques for calculating Voronoi diagrams on the sphere.
    Shaded regions highlight the range of runtimes across the 10 samples for each point distribution with the solid lines representing the average runtime.
    Voronoi diagrams embedded in the images are meant to illustrate the differences between each type of point distribution.
    }
    \label{fig:voronoi-comparison}
\end{figure}
\subsection{Discretization of the spherical shallow water equations}
\subsubsection{Discrete gradient and divergence operators on the sphere}
\label{sec:differential-operators}%
Discretizing the spherical shallow water equations requires calculating the gradient and divergence operators on the sphere.
Here, we use the first-order accurate differential operators of Springel~\cite{Springel_2010}, which can be derived using Stokes' theorem.
\begin{figure}
    \centering
    \includegraphics[width=0.5\textwidth]{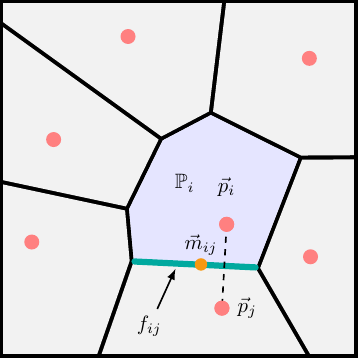}
    \caption{Terms involved in the gradient and divergence calculations.
        The face $f_{ij}$ (in green) is the Voronoi edge between sites $\vec{p}_i$ and $\vec{p}_j$.
        The length of $f_{ij}$ is denoted as $\lvert f_{ij}\rvert$ and its midpoint is $\vec{m}_{ij}$.
        $\mathbb{P}_i$ is the power cell associated with site $\vec{p}_i$ and has an area of $\lvert \mathbb{P}_i\rvert$.
    }
    \label{fig:differential-operators}
\end{figure}
The notation used in the corresponding expressions is described in Fig.~\ref{fig:differential-operators}.
Let $u\,\colon\ \mathbb{S}^2 \to \mathbb{R}$ be a scalar field with $u_i = u(\vec{p}_i)$ the value of $u$ at site $\vec{p}_i$.
Also denote $\vec{p}_{ij} = \vec{p}_j - \vec{p}_i$.
The gradient $\nabla_s u$ at site $\vec{p}_i$ is%
\begin{align}%
    \nabla_s u_i = \frac{\mathbf{P}_i}{\lvert \mathbb{P}_i\rvert}\sum\limits_{f_{ij}\,\in\,\partial\mathbb{P}_i}  \frac{\lvert f_{ij}\rvert}{\lVert\vec{p}_{ij}\rVert}\left((u_j - u_i)\vec{c}_{ij} - \frac{1}{2}(u_i + u_j)\vec{p}_{ij}\right).
\end{align}%
where $\vec{c}_{ij}$ is the vector from the Delaunay edge midpoint to the facet midpoint ($\vec{m}_{ij}$): $\vec{c}_{ij} = \vec{m}_{ij} - \frac{1}{2}(\vec{p}_i + \vec{p}_j)$.
The notation $\lvert\cdot\rvert$ denotes the measure of the corresponding entity, so $\lvert f_{ij}\rvert$ is the length of the facet between sites $\vec{p}_i$ and $\vec{p}_j$ and $\lvert\mathbb{P}_i\rvert$ is the area of the power cell $\mathbb{P}_i$.
The area of the power cell is computed by decomposing $\mathbb{P}_i$ into spherical triangles and adding up the resulting areas. 
A similar expression for the divergence of a vector field $\vec{v}\,\colon \mathbb{S}^2 \to \mathbb{R}^3$ can be obtained with $\vec{v}_i = \vec{v}(\vec{p}_i)$ the value of the vector field at site $\vec{p}_i$:%
\begin{align}%
    \nabla_s \cdot \vec{v}_i = \frac{1}{\lvert \mathbb{P}_i\rvert}\sum\limits_{f_{ij}\,\in\,\partial\mathbb{P}_i}  \frac{\lvert f_{ij}\rvert}{\lVert\vec{p}_{ij}\rVert}\left((\vec{v}_j - \vec{v}_i)\cdot(\mathbf{P}_i\vec{c}_{ij}) - \frac{1}{2}(\vec{v}_i + \vec{v}_j)\cdot(\mathbf{P}_i\vec{p}_{ij})\right),
\end{align}%
Although these expressions were originally derived for Voronoi diagrams~\cite{Springel_2010}, they can also be applied to power diagrams since the main assumption in Springel's derivation is the orthogonality between $\vec{p}_{ij}$ and face $f_{ij}$, which still holds in the case of power cells.
The only difference is that our operators involve a projection to the tangent plane of the unit sphere.
This is performed by the projection matrix $\mathbf{P}_i = \mathbf{I} - \vec{p}_i\vec{p}_i^T$ since $\vec{p}_i$ is the same as the normal vector for the case of $\mathbb{S}^2$.
When calculating the divergence, either vector involved in the dot products can be projected since the projection matrix $\mathbf{P}_i$ is symmetric.

Note that the Voronoi/power diagram does not need to be explicitly stored in our method: only the geometric quantities such as the length and midpoint of each facet $f_{ij}$ and the cell areas need to be stored. Since the Voronoi diagrams are computed on the unit sphere, these operators need to be scaled by $1 / a$ before applying them to problems in physical space.
\subsubsection{Semi-implicit integration in time}
Disturbances in the flow field, such as the conical mountain of Williamson's fifth test case~\cite{Williamson_1992}, can cause gravity waves to develop which travel very quickly.
Numerical schemes for solving the shallow water equations should be robust in the presence of gravity waves, ideally without introducing numerical artifacts that affect the accuracy of the solution.

Initially, we attempted an explicit Euler time integration scheme, and found that artificial viscosity~\cite{Ata_2005} was needed to stabilize the solution.
This had the effect of significantly smoothing out the solution.
Instead, we use a semi-implicit integration scheme to advance the solution in time, and we note that this approach does not require any artificial viscosity for the cases studied in this paper.
Semi-implicit approaches have been previously suggested as a robust method for integrating atmospheric simulations in time~\cite{Weller_2014}.

In what follows, the superscripts $n$ and $n+1$ denote the current (known) and next (unknown) time steps in the simulation, which starts from the known initial condition at time step $0$ of $h^0$ and $\vec{u}^0$.
The time between steps $n$ and $n+1$ is denoted by $\Delta t$. Similar to the method of de Goes~\cite{deGoes_2015}, each particle is advected from the centroid $\vec{c}_{i}$ of the power cell according to its current velocity $\vec{u}_i^n$:
\begin{align}\label{eq:advect-particles}
    \vec{p}_i^{n+1} = \vec{c}_i^n + \Delta t \vec{u}_i^{n}.
\end{align}
To prevent particles from potentially overlapping (which would be problematic when computing the Voronoi diagram), the time step $\Delta t$ is limited such that the velocity can never advect a particle directly onto its closest Voronoi facet.
However, for the time steps considered in the results section, this limiting procedure was never activated.%

In our semi-implicit approach, the mass and momentum equations are discretized as:
\begin{align}
h^{n+1}_i &= h_i^n - h_i^n \Delta t \nabla_s\cdot\vec{u}_i^{n+1}, \label{eq:discretization-h}\\
\vec{u}^{n+1}_i &= \vec{u}_i^n - \Delta t\left[g\nabla_s (h^{n+1}_i + h_s) - f \vec{r}_i^n\times\vec{u}_i^n - \lVert\vec{u}_i^n\rVert^2 \vec{r}_i^n\right].\label{eq:discretization-u}
\end{align}
Since $\nabla_s\cdot\vec{u}_i^{n+1}$ is needed to advance the height field in Eq.~\ref{eq:discretization-h}, taking the divergence of Eq.~\ref{eq:discretization-u} and grouping terms gives:
\begin{align}\label{eq:semi-implicit-div-u}
    \nabla_s\cdot\vec{u}_i^{n+1} = -g \Delta t \nabla_s^2 h^{n+1}_i + \nabla_s\cdot\vec{F}(\vec{r}_i^n, \vec{u}_i^n),
\end{align}
where $\vec{F}(\vec{r}_i^n, \vec{u}_i^n) = \vec{u}_i^n - \Delta t(g\nabla_s h_s + f\vec{r}_i^n\times \vec{u}_i^n + \lVert\vec{u}_i^n\rVert^2 \vec{r}_i^n)$. Substituting Eq.~\ref{eq:semi-implicit-div-u} into Eq.~\ref{eq:discretization-h} yields
\begin{align}\label{eq:semi-implicit-height}
\left(\mathbf{I} - \Delta t^2 gh_i^n \nabla_s^2\right)h_i^{n+1} = h_i^n\left(1 - \Delta t\nabla_s \cdot \vec{F}(\vec{r}_i^n, \vec{u}_i^n)\right).
\end{align}
With $h_i^n$ known, Eq.~\ref{eq:semi-implicit-height} is a linear system of equations for $h^{n+1}$ which we solve using the biconjugate gradient method in \texttt{OpenNL}~\cite{Levy_2006,Levy_OpenNL}. The Laplacian $\nabla_s^2$ is constructed using the operator described in the next section.
\subsubsection{Enforcing mass conservation with optimal transport theory}
Any update to the particle height field may not necessarily conserve the total mass of the system. To improve the conservation properties of our Lagrangian scheme, we propose to conserve the total mass of the particles by solving a semi-discrete optimal transport problem which has been previously applied to simulating fluids~\cite{GM_2018, Levy_2024, deGoes_2015} and in early reconstructions of the Universe~\cite{Levy_2021}.

The main idea is to assign \emph{target} areas ($a_{t,i}$) to each particle in order to conserve the total mass of the system.
Assume each particle starts with a mass of $m_i^0 = \rho a_i^0 h_i^0 = m_i^{n+1}$, with $\rho$ being the constant density of the fluid.
Since every particle has a constant mass throughout the simulation, then the target area is $a^{n+1}_{t,i} = m_i^0 / h_i^{n+1}$.
The optimal transport problem then consists of finding the weights in the power diagram that result in power cells with these target areas.
This can be solved with a Newton-based approach~\cite{Levy_2018, Kitagawa_2016, Milliken_2024}, which quadratically converges to a global minimum by minimizing the following energy functional:
\begin{align}
    \mathcal{E}(\{\mathbb{P}_i\}) = \sum\limits_{i = 1}^n \left[\int\limits_{\mathbb{P}_i}\lVert \vec{p}_i - \vec{x}\rVert^2 - w_i\mathrm{d}\vec{x} + w_i a_{t,i}\right].
\end{align}
At each iteration $k$ of the Newton step, the weights $\mathbf{w} = \{w_i\}$ are updated according to $\mathbf{w}^{k+1} = \mathbf{w}^k + \delta \mathbf{w}$ where $\delta \mathbf{w}$ is obtained by solving:
\begin{align}\label{eq:newton-update}
    \nabla_w^2\mathcal{E}\delta\mathbf{w} = -\nabla_w\mathcal{E}.
\end{align}
The component of the gradient of $\mathcal{E}$ with respect to weight $w_i$ is $\nabla_w \mathcal{E}_i = a_{t,i} - a_{k,i}$, i.e. the difference between the target and current power cell area (at iteration $k$ of the optimization). Using the same notation introduced in Section~\ref{sec:differential-operators}, the components of the Hessian are~\cite{Levy_2018, deGoes_2015}
\begin{align}\label{eq:energy-hessian}
    \nabla_w^2 \mathcal{E}_{ij} = \frac{1}{2}\frac{\lvert f_{ij}\rvert}{\lVert\vec{p_i} - \vec{p}_j\rVert},\quad \nabla_w^2\mathcal{E}_{ii} = -\sum_j\limits\nabla_w^2\mathcal{E}_{ij},
\end{align}
where the sum for the diagonal term is performed over adjacent power cells.
Starting from an initial guess of zero weights, this Newton-based procedure usually converges to $10^{-8}$ in the $L^2$ norm of the gradient $\nabla_w\mathcal{E}$ in a few iterations (about 3 - 5).

It is important to note that the optimal transport problem is further constrained by the fact that all the power cell areas should add up to the total area of the unit sphere. Since we are free to assign the target particle areas in order to conserve mass, this is solved by simply scaling the target areas:
\begin{align}\label{eq:target-areas}
    a_{t,i} = \alpha\frac{m_i^0}{h_{i}^{n+1}}, \quad \alpha = \frac{4\pi}{\sum\limits_i\frac{m_i^0}{h_i^{n+1}}}.
\end{align}
\subsection{Summary}
The proposed Lagrangian approach for simulating the shallow water equations will now be summarized using the equations developed in the previous sections. It is assumed that initial height $h^0$ and velocity data $\vec{u}^0$ are prescribed at some initial particle positions. Here, the particles are initialized as the vertices of a subdivided icosahedron, similar to the particle initialization of Capecelatro~\cite{Capecelatro_2018}. Then, for each time step:
\begin{enumerate}
    \item Advect the particles using Eq.~\ref{eq:advect-particles}.
    \item Update particle heights $h^{n+1}$ using Eq.~\ref{eq:semi-implicit-height}.
    \item Compute the target areas to conserve mass using Eq.~\ref{eq:target-areas}.
    \item While the power cell areas are not sufficiently close to the target areas:
    \begin{enumerate}
        \item Calculate the gradient and Hessian for the Newton step using Eq.~\ref{eq:energy-hessian}.
        \item Solve for $\delta\mathbf{w}$ using Eq.~\ref{eq:newton-update}.
        \item Update the power diagram weights: $\mathbf{w}^{k+1} = \mathbf{w}^k + \delta \mathbf{w}$.
        \item Recalculate the power diagram with the new set of weights.
    \end{enumerate}
    \item Update the particle velocities using Eq.~\ref{eq:discretization-u}.
\end{enumerate}
\section{Spherical Shallow Water Equation benchmarks}
The proposed Lagrangian method will now be applied to some commonly used benchmarks for numerical simulations of the shallow water equations.
The reader is referred to the paper by Williamson et al.~\cite{Williamson_1992} for a complete description of the test cases.
Some of these cases are parametrized by an angle $\alpha$ to test the robustness of the solver in handling the singularities at the poles.
This is not an issue here since all calculations are performed in three-dimensional Cartesian space, so a single value for $\alpha$ will be used instead of performing a parameter sweep.

The method proposed here naturally conserves mass, so only momentum and energy conservation results are reported.
Where appropriate, the solution will either be compared with an analytic one, or will be compared with the solution obtained from an existing solver.
The total momentum ($I_p$) and energy ($I_e$) are:

\begin{align}
    I_p &= \sum\limits_i \lvert \mathbb{P}_i\rvert \left[\lVert\vec{u}_i\rVert + \sqrt{gh_i}\right], \\
    I_e &= \frac{1}{2}\sum\limits_i \lvert\mathbb{P}_i\rvert \left[h_i \lVert\vec{u}_i\rVert^2 + g((h_i + h_{s,i})^2 - h_{s,i}^2)\right].
\end{align}
Using $I_p^0$ and $I_e^0$ to refer to the initial momentum and energy of the particles (computed using the initial condition for each case), conservation will be reported as:
\begin{align}
    \Delta I_p = \frac{I_p - I_p^0}{I_p^0},\quad \Delta I_e = \frac{I_e - I_e^0}{I_e^0}.
\end{align}
For each case, 5, 6, and 7 subdivisions of the icosahedron are used to initialize the particles, resulting in 10,242, 40,962, and 163,842 particles, respectively.
The range of time steps $\Delta t$ studied will also be specified for each case.

The variables $\lambda$ and $\theta$ refer to the longitude and latitude, with $\lambda \in [-\pi,\pi]$ and $\theta \in [\frac{-\pi}{2}, \frac{\pi}{2}]$.
The unit vectors in these directions are denoted by $\vec{e}_\lambda$ and $\vec{e}_\theta$.
The radius of the Earth is $a = 6,371,220\, \mathrm{m}$, $\Omega = 7.292\cdot 10^{-5}, \mathrm{m/s}$ and $g = 9.80616\ \mathrm{m}/\mathrm{s}^2$.
\begin{figure}
    \begin{subfigure}{0.49\textwidth}
        \includegraphics[width=\textwidth]{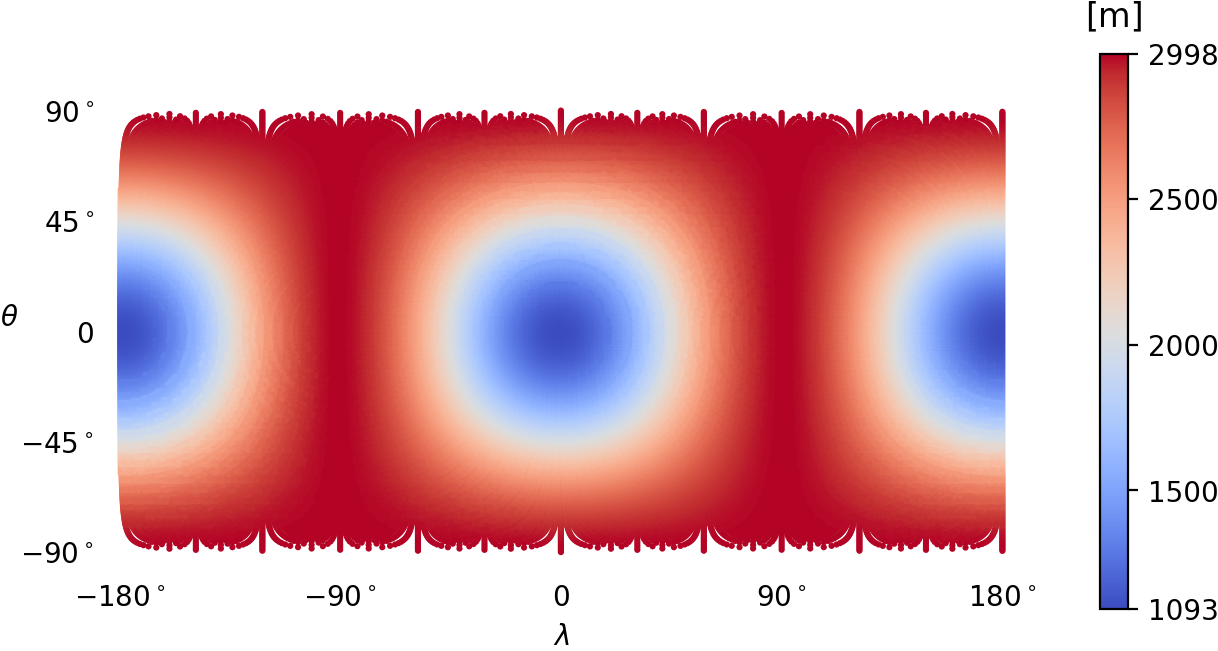}
        \caption{Height at 0 days.}
    \end{subfigure}
    \begin{subfigure}{0.49\textwidth}
        \includegraphics[width=\textwidth]{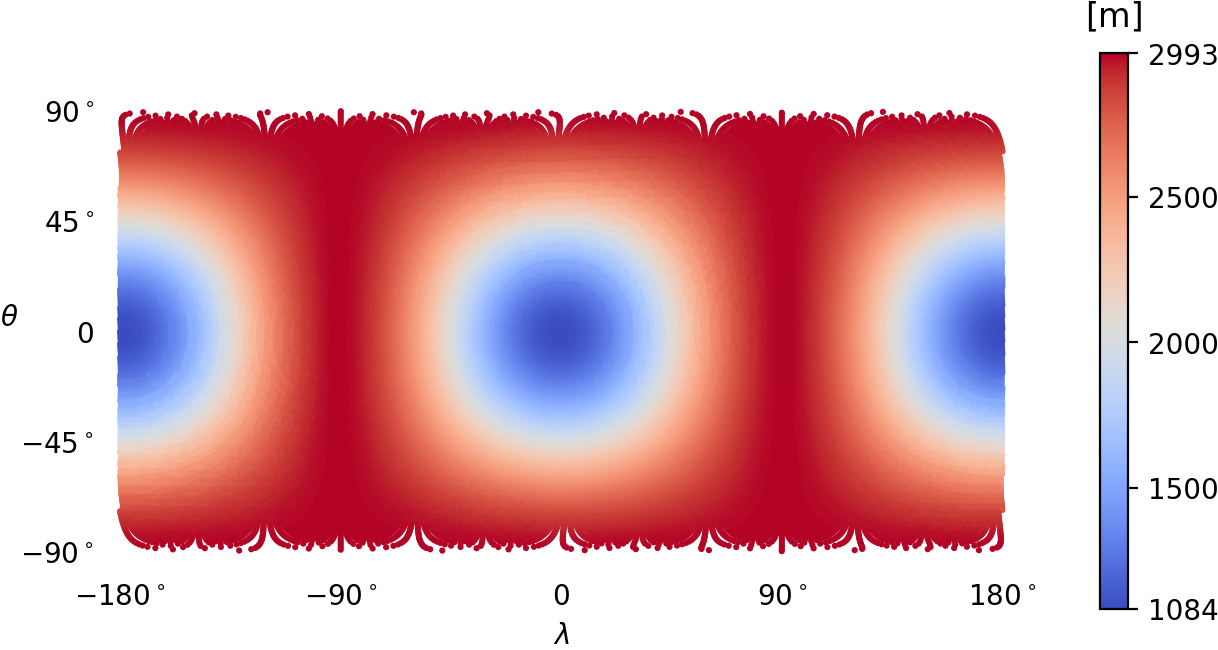}
        \caption{Height at 12 days.}
    \end{subfigure}
    \caption{Height field for Williamson Test Case 2 at the beginning and end of the simulation computed with 163,842 particles and a time step of 30 seconds.}
    \label{fig:w2-height}
\end{figure}
\subsection{Williamson Test Case 2: Steady state zonal geostrophic flow}
The second test case of Williamson et al.~\cite{Williamson_1992} (with $\alpha = \frac{\pi}{2}$) consists of a steady state flow with a velocity of
\begin{align*}
    \vec{u} = u_\lambda \vec{e}_\lambda + u_\theta \vec{e}_\theta,\quad u_\lambda = u_0\sin\theta\cos\lambda,\ u_\theta = -u_0\sin\lambda,
\end{align*}
where $u_0 = 2\pi a / (12\ \mathrm{days})$.
The initial (and analytic) height of a particle is $h = h_0 - \frac{1}{g}(a \Omega u_0 + \frac{1}{2}u_0^2)z$ where $h_0 = 2940 / g$.
The Coriolis parameter is $f = -2\Omega x$.
The analytic height and velocity of a particle consists of evaluating these expressions using the current position of the particle as it moves in time.
Time steps of $\Delta t = 2\,\mathrm{minutes},\ 1\,\mathrm{minute}$, and $30\,\mathrm{seconds}$ were used for 10,242, 40,962, and 163,842 particles, respectively.

Fig.~\ref{fig:w2-height} shows the initial and final (after 12 days) height of each particle which qualitatively shows that these fields are similar.
The evolution of the $\ell^2$ error between the computed and analytic fields is shown Fig.~\ref{fig:w2-error-evolution} and the convergence of the error at day 5 (as recommended by Williamson et al.) is shown in Fig.~\ref{fig:w2-error-convergence}.
The latter shows that the error converges at a rate of about $1$ as the effective size covered by each particle (measured as $(\mathrm{\# particles})^{-1/2})$ decreases.
The conservation of momentum and energy are shown in Fig.~\ref{fig:w2-conservation}, with the 163,842-particle case maintaining a fractional difference of about $10^{-3}$ in the initial energy and momentum of the system.
\begin{figure}
    \centering
    \begin{subfigure}[t]{0.325\textwidth}
        \includegraphics[width=\textwidth]{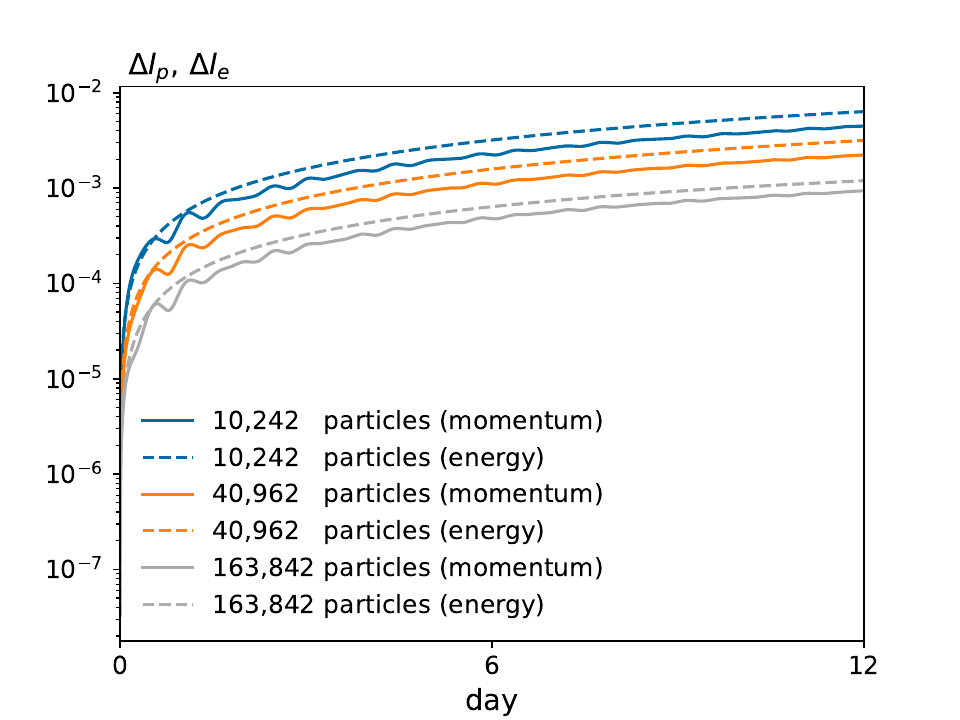}
        \caption{Evolution of momentum and energy conservation.}
        \label{fig:w2-conservation}
    \end{subfigure}
    \begin{subfigure}[t]{0.325\textwidth}
        \includegraphics[width=\textwidth]{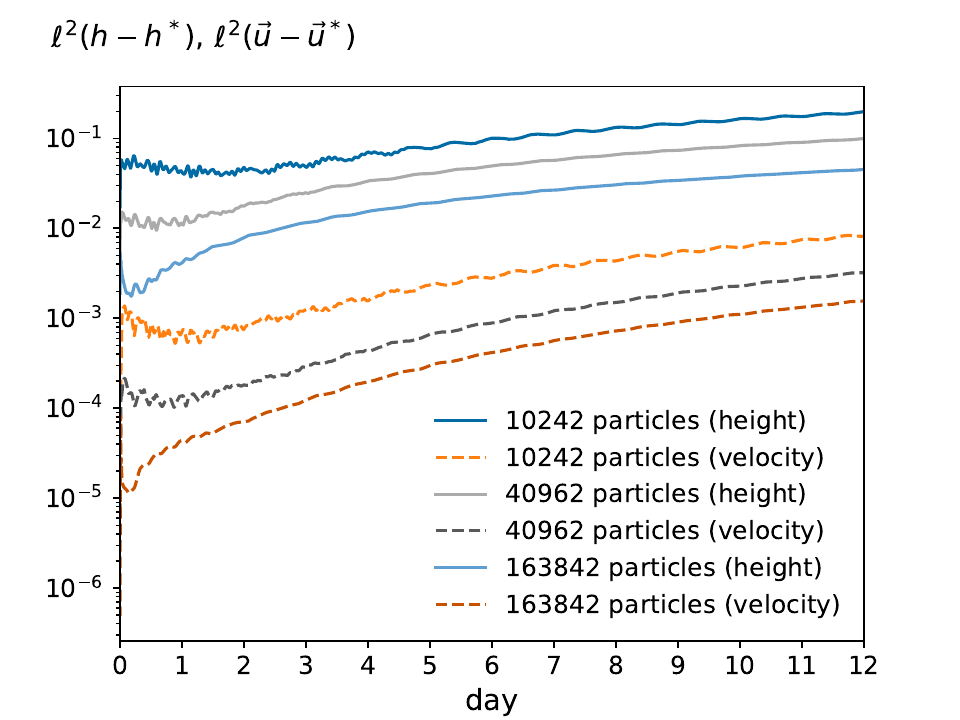}
        \caption{Evolution of the error.}
        \label{fig:w2-error-evolution}
    \end{subfigure}
    \begin{subfigure}[t]{0.325\textwidth}
        \includegraphics[width=\textwidth]{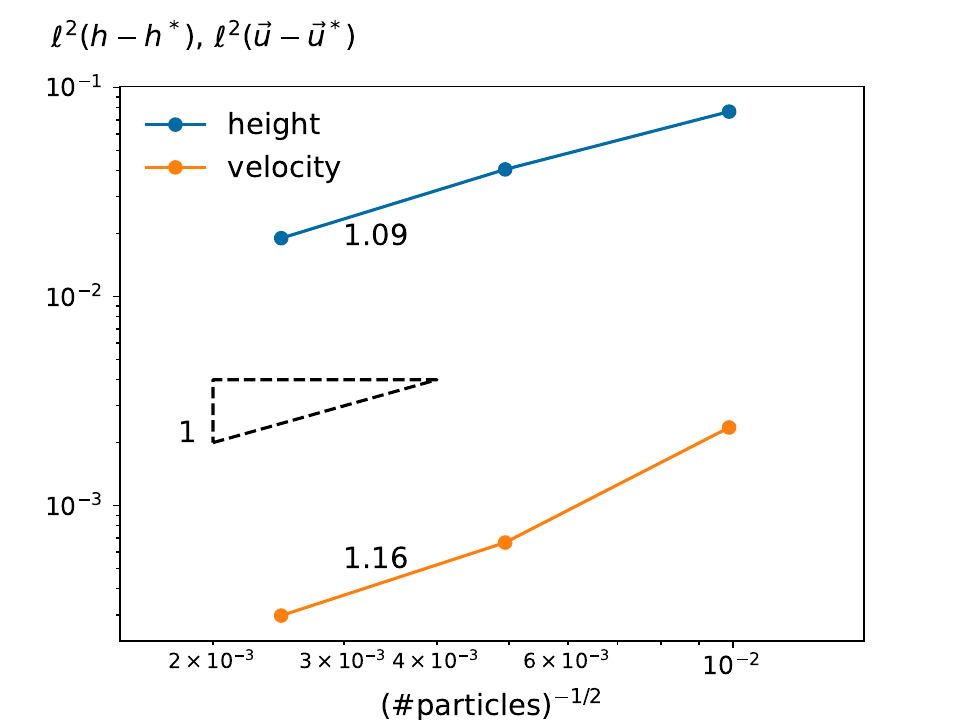}
        \caption{Convergence of the $\ell^2$ error in the solution at 5 days.}
        \label{fig:w2-error-convergence}
    \end{subfigure}
    \caption{Evolution of the conservation of momentum and energy, as well as the $\ell^2$ error in the height and velocity fields for Williamson Test Case 2.
        At day 5, the error in both the height and velocity fields converge with a rate of 1.09 and 1.16, respectively.
    }
    \label{fig:w2-properties}
\end{figure}
\subsection{Williamson Test Case 5: Zonal flow over an isolated mountain}
The next test case introduces a conical mountain centered at $\lambda_0 = -\frac{1}{2}\pi$ and $\theta_0 = \frac{1}{6}\pi$ into a zonal flow field.
Although the usefulness of this case has been debated~\cite{Flyer_2012, Galewsky_2004} because of the non-differentiability of the mountain, it is useful to include since it shows that our proposed method is robust in the presence of such phenomena.
The initial velocity of the particles is $\vec{u} = u_0(-y, x, 0)$ and the Coriolis parameter is $f = 2\Omega z$.
The initial height of the fluid above the surface is $h = h_0 - \frac{1}{g}(\Omega a u_0 + \frac{1}{2}u_0^2)$ with $u_0 = 20\ \mathrm{m/s}$ and $h_0 = 5960\ \mathrm{m}$.
The simulation was run for 15 days with time steps of $\Delta t = 2\,\mathrm{minutes},\ 1\,\mathrm{minute}$, and $30\,\mathrm{seconds}$ for 10,242, 40,962, and 163,842 particles, respectively.
\begin{figure}
    \begin{subfigure}{0.49\textwidth}
        \includegraphics[width=\textwidth]{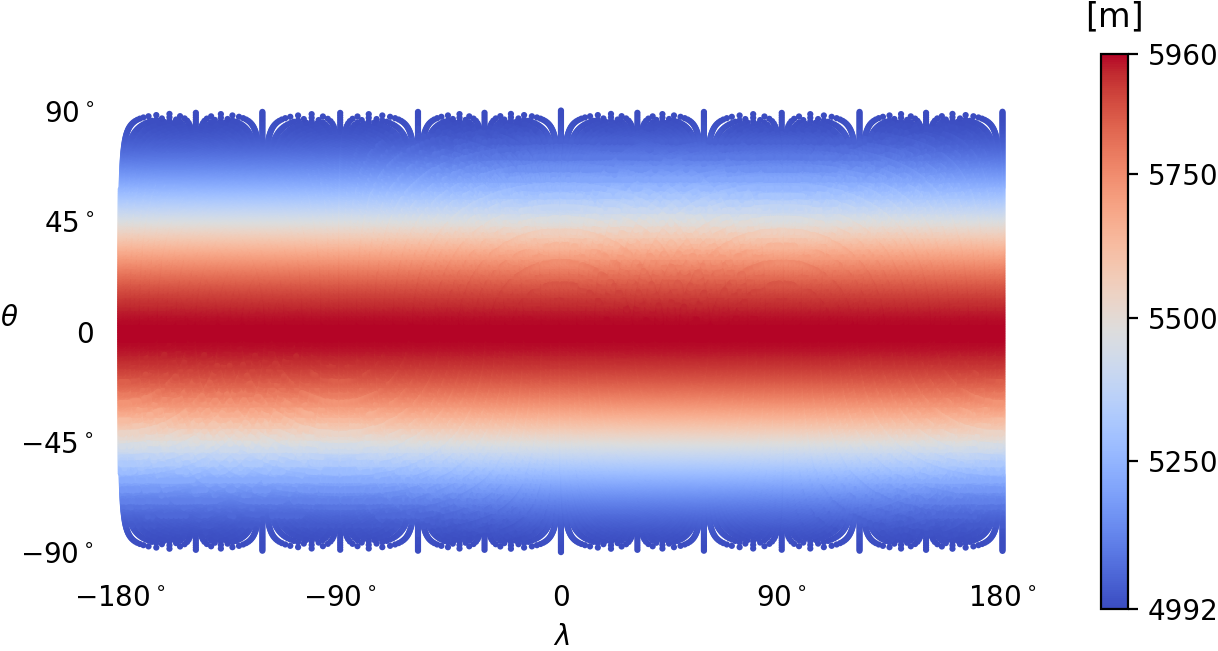}
        \caption{Height at 0 days.}
    \end{subfigure}
    \begin{subfigure}{0.49\textwidth}
        \includegraphics[width=\textwidth]{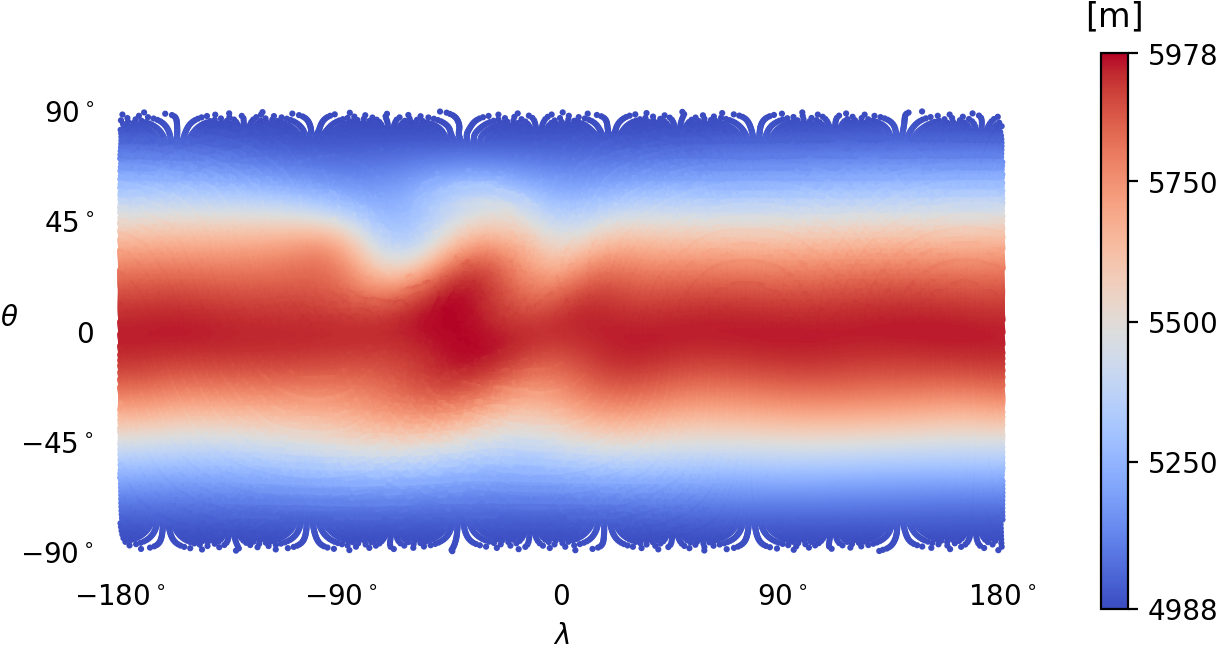}
        \caption{Height at 5 days.}
    \end{subfigure}
    \begin{subfigure}{0.49\textwidth}
        \includegraphics[width=\textwidth]{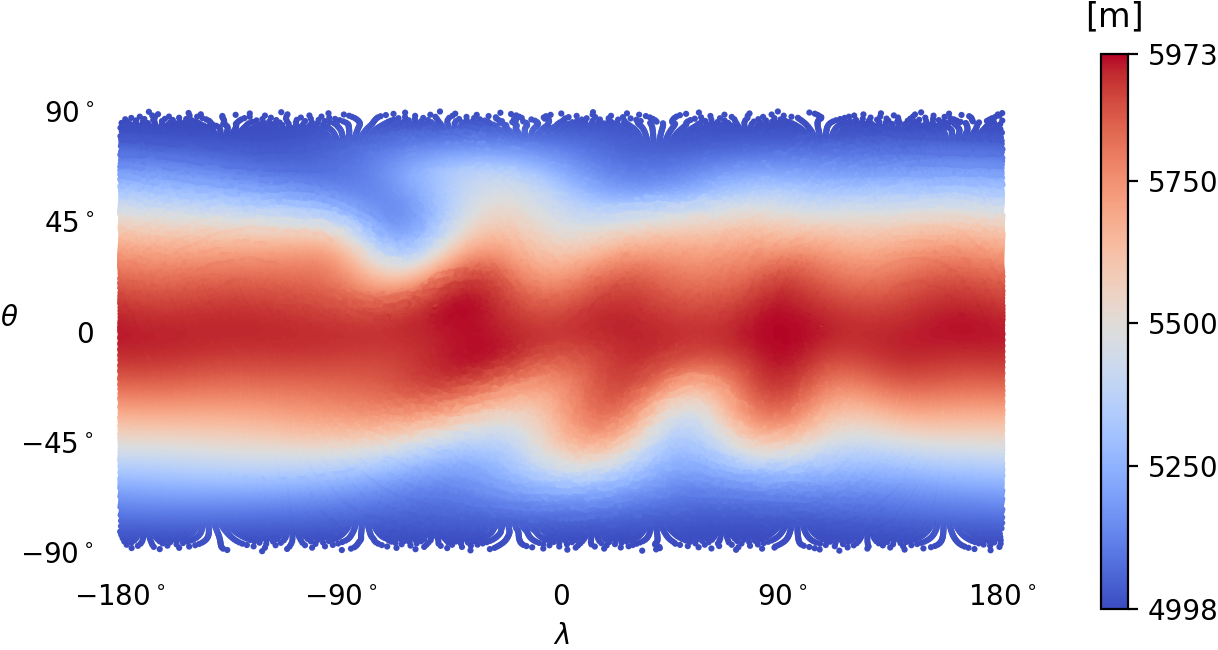}
        \caption{Height at 10 days.}
    \end{subfigure}
    \begin{subfigure}{0.49\textwidth}
        \includegraphics[width=\textwidth]{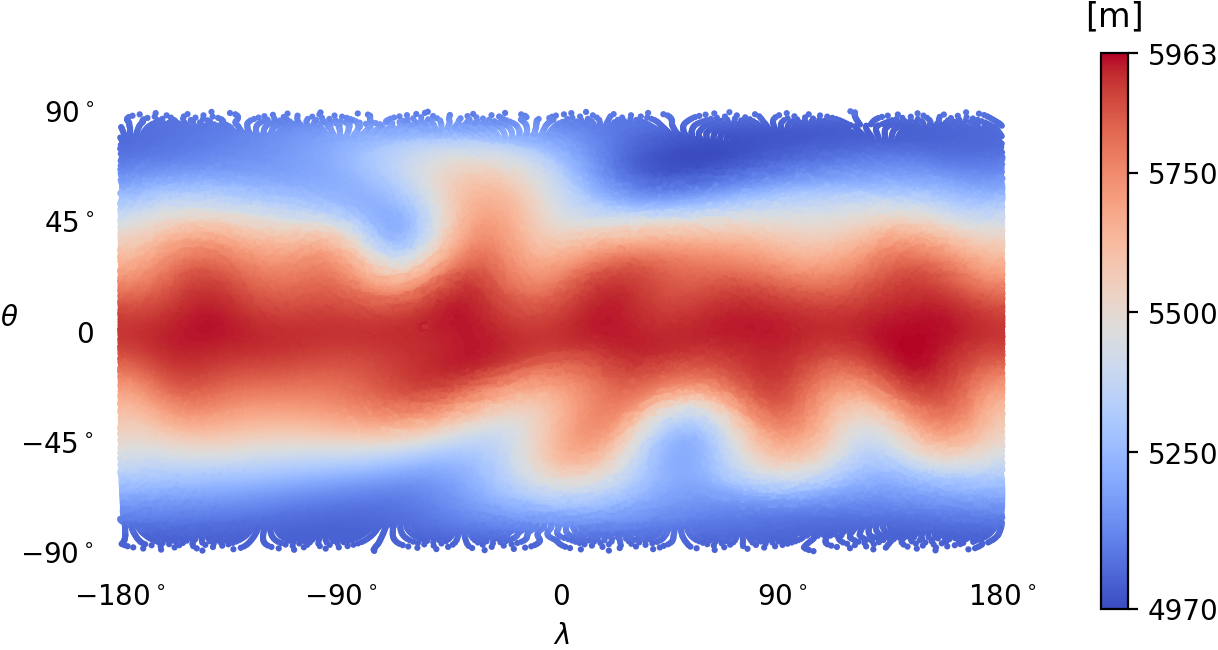}
        \caption{Height at 15 days.}
    \end{subfigure}
    \caption{Height field for Williamson Test Case 5 at 0, 5, 10 and 15 days computed with 163,842 particles and a time step of 30 seconds.}
    \label{fig:w5-h-latlon}
\end{figure}
The resulting solution in latitude-longitude coordinates with 163,842 particles is shown in Fig.~\ref{fig:w5-h-latlon} which is qualitatively consistent with results from other solvers~\cite{Flyer_2012}.
Power diagrams at each day, with cells colored by the corresponding particle height and weight in Cartesian coordinates, are shown in Fig.~\ref{fig:w5-hw-3d}.
At the onset of the simulation, the weights are all zero (blue) but the presence of the mountain causes upstream weights to be higher relative to the other cells.
\begin{figure}
    \begin{subfigure}[t]{0.225\textwidth}
        \includegraphics[width=\textwidth]{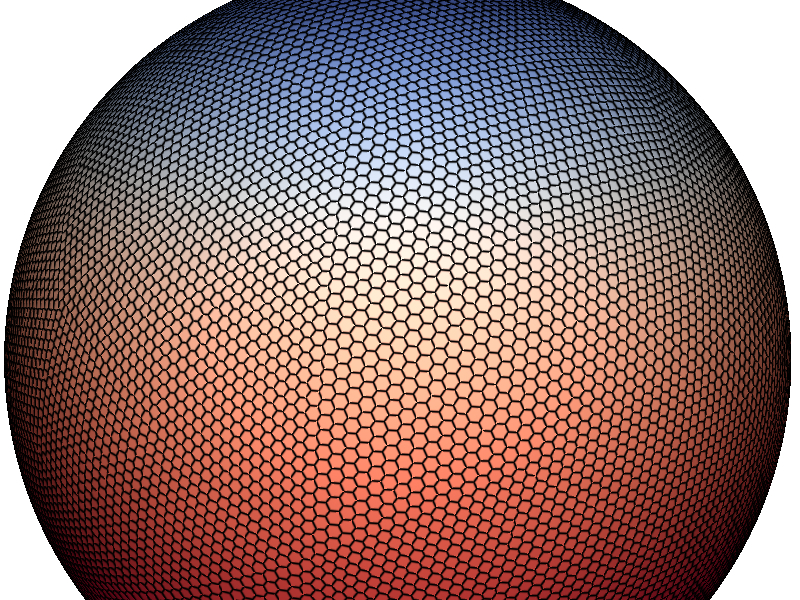}\\
        \includegraphics[width=\textwidth]{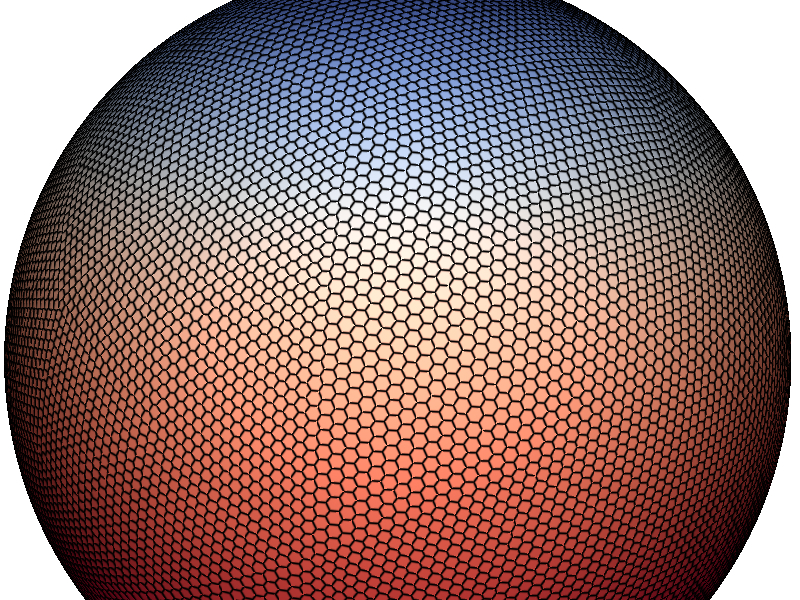}
        \caption{0 days.}
    \end{subfigure}
    \begin{subfigure}[t]{0.22\textwidth}
        \includegraphics[width=\textwidth]{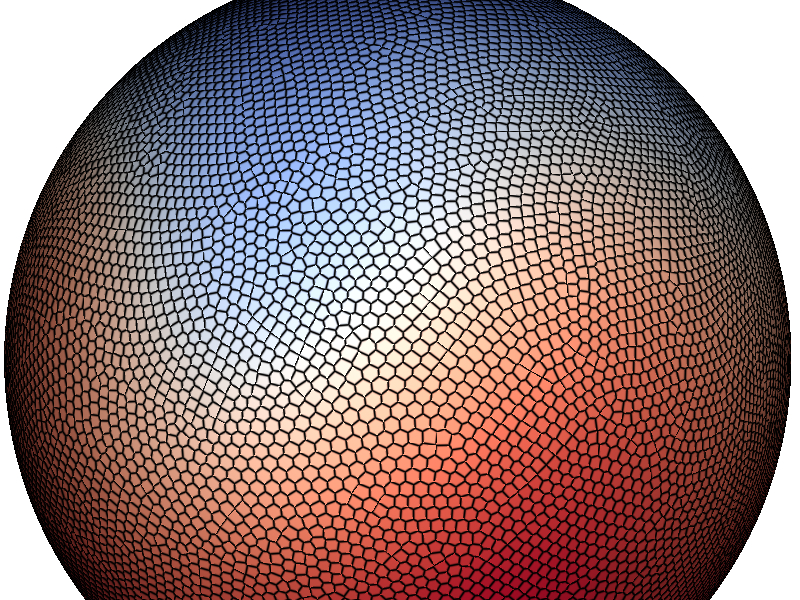}\\
        \includegraphics[width=\textwidth]{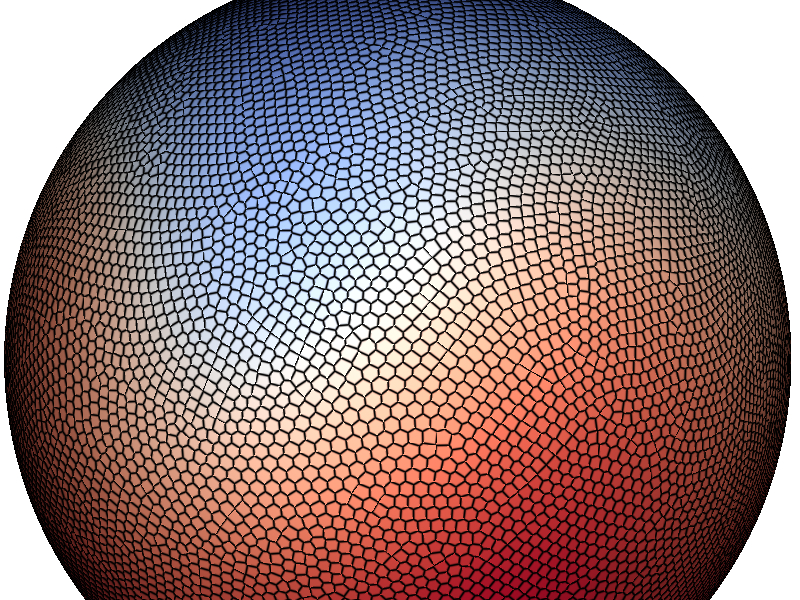}
        \caption{5 days.}
    \end{subfigure}
    \begin{subfigure}[t]{0.225\textwidth}
        \includegraphics[width=\textwidth]{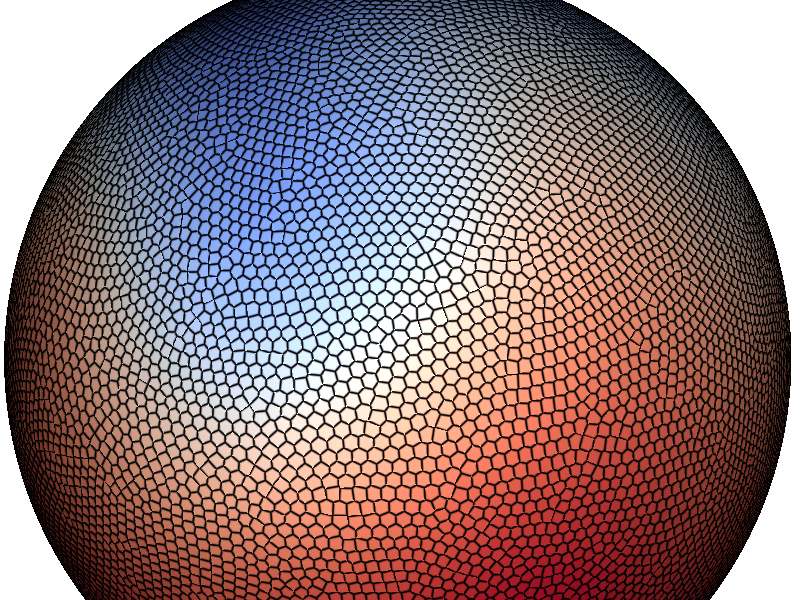}\\
        \includegraphics[width=\textwidth]{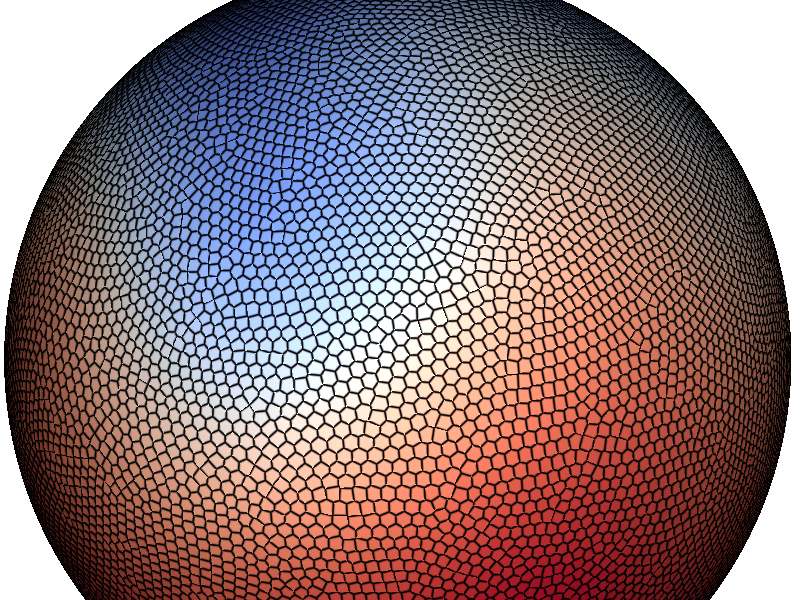}
        \caption{10 days.}
    \end{subfigure}
    \begin{subfigure}[t]{0.225\textwidth}
        \includegraphics[width=\textwidth]{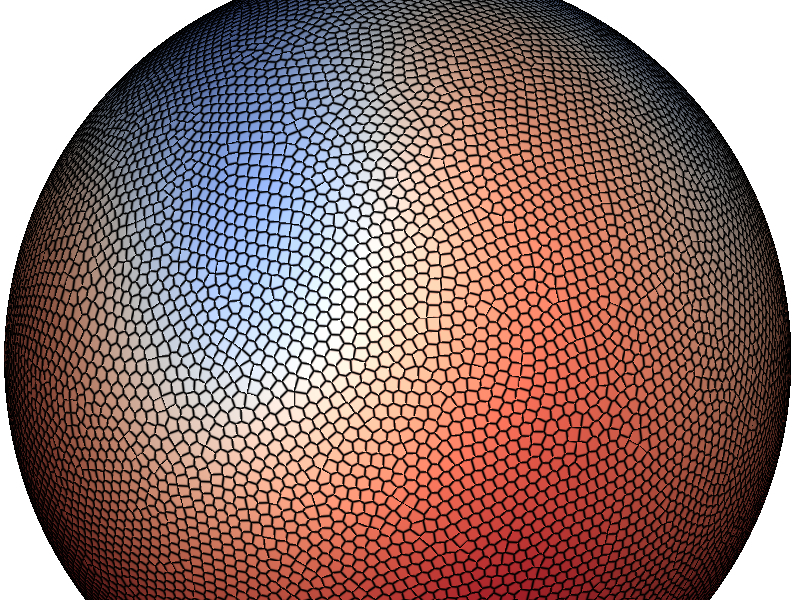}\\
        \includegraphics[width=\textwidth]{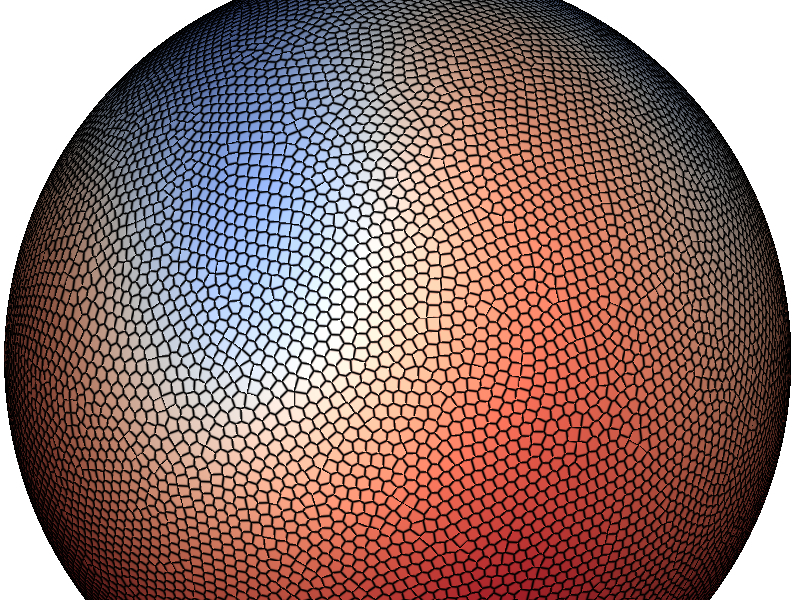}
        \caption{15 days.}
    \end{subfigure}
    \caption{$3d$ view of the height field (top row) for Williamson Test Case 5 at 0, 5, 10 and 15 days computed with 40,962 particles and a time step of 2 minutes.
        The bottom row shows the weights of the power diagram at that time step which is zero at the beginning of the simulation but then increases upstream of the mountain as the simulation runs.
    }
    \label{fig:w5-hw-3d}
\end{figure}
The accuracy of the solver is assessed by using a TRiSK-based finite volume solver called \texttt{swe-python}~\cite{Engwirda_swe_python} to generate a reference solution on a centroidal Voronoi tessellation with 655,362 cells and 1,310,720 vertices.
To measure the difference in the solution at a given particle (for a chosen time), the solution at the closest cell center (in the reference solution) is used as the reference solution for that particle.
The normalized difference $\Delta I_h = \sqrt{\sum_{i=1}^n (h_{i} - h_{i,*})^2 / \sum_{i=1}^n h_{i,*}^2}$ between our particle heights $h_i$ and the height $h_i^*$ of the closest cell in the reference solution for 10,242, 40,962, and 163,842 particles is shown in Fig.~\ref{fig:w5-comparison}.
It should be noted that our solution does not agree with the reference solution at day 0, perhaps due to the cell-wise smoothing procedure used by the reference solver when applying the initial condition.
Furthermore, momentum and energy conservation are comparable to the trends reported in Capecelatro's work~\cite{Capecelatro_2018}.
\begin{figure}
    \centering
    \begin{subfigure}[t]{0.49\textwidth}
        \includegraphics[width=\textwidth]{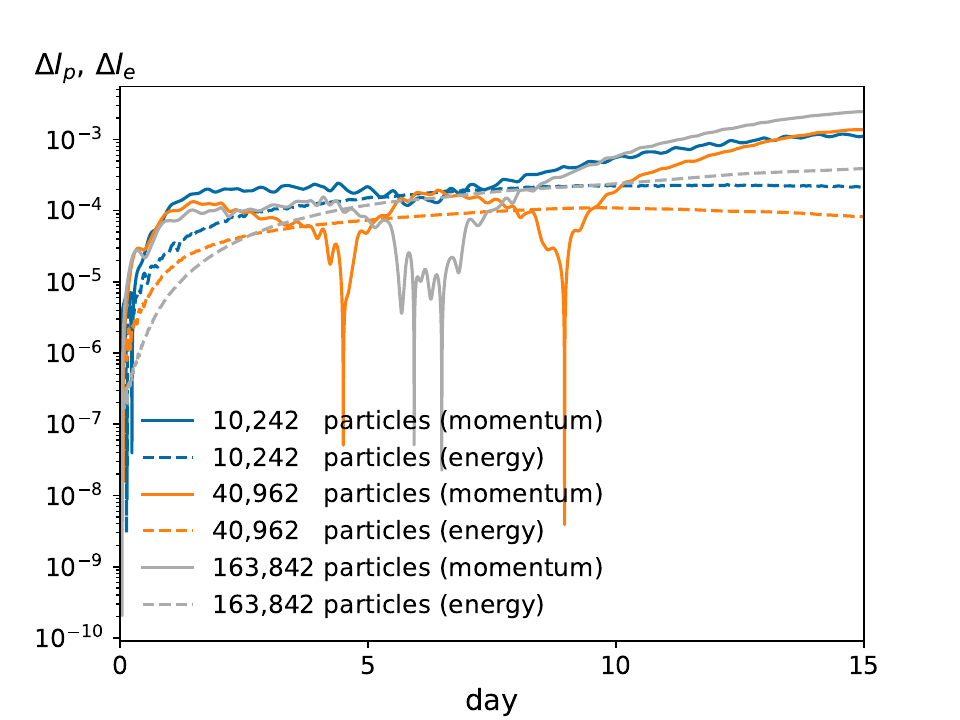}
        \caption{Conservation of momentum (dashed) and energy (solid) over 15 days of simulation.}
        \label{fig:w5-conservation}
    \end{subfigure}
    \begin{subfigure}[t]{0.49\textwidth}
        \includegraphics[width=\textwidth]{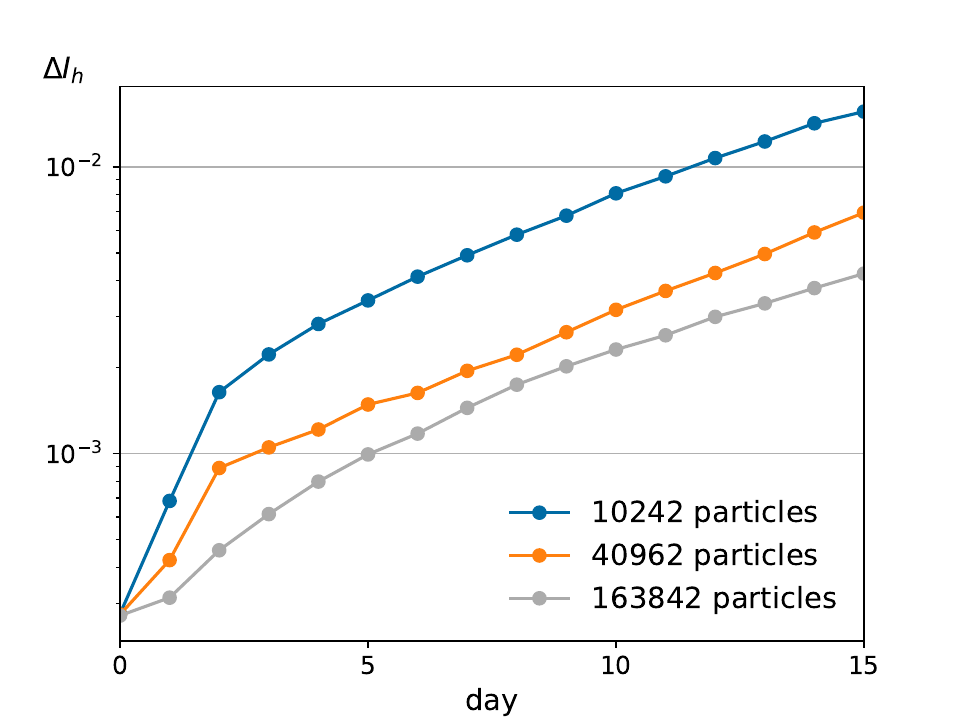}
        \caption{Difference between our computed height field with that obtained from \texttt{swe-python} over 15 days of simulation.}
        \label{fig:w5-comparison}
    \end{subfigure}
    \caption{Conservation of momentum (dashed) and energy (solid) over 15 days of simulation for Williamson Test Case 5.
        This test case was also compared with \texttt{swe-python}~\cite{Engwirda_swe_python}.
    }
\end{figure}
\subsection{Williamson Test Case 6: Rossby-Haurwitz wave}
The final test case consists of simulating the wavenumber 4 Rossby-Haurwitz wave.
Although this is not an analytic solution to the governing equations, it is commonly used to benchmark shallow water equation solvers.
Please refer to the complete description of the initial surface velocity and height expressions in the Williamson et al. paper~\cite{Williamson_1992}.
As in the previous cases, time steps of $\Delta t = 2\,\mathrm{minutes},\ 1\,\mathrm{minute}$, and $30\,\mathrm{seconds}$ for 10,242, 40,962, and 163,842 particles, respectively, were used in the simulations.
The height field is shown at days 0, 5, 10 and 15 in Fig.~\ref{fig:w6-h-solution}.
While the wave roughly retains its shape for all particle numbers and time steps, the final solution does appear somewhat smoothed relative to the initial condition.
Among the three cases studied in this work, conservation of energy and momentum (Fig.~\ref{fig:w6-conservation}) was the poorest for the Rossby-Haurwitz wave, with $\Delta I_p$ and $\Delta I_e$ lying between $10^{-2}$ and $10^{-3}$.
\begin{figure}
    \begin{subfigure}{0.49\textwidth}
        \includegraphics[width=\textwidth]{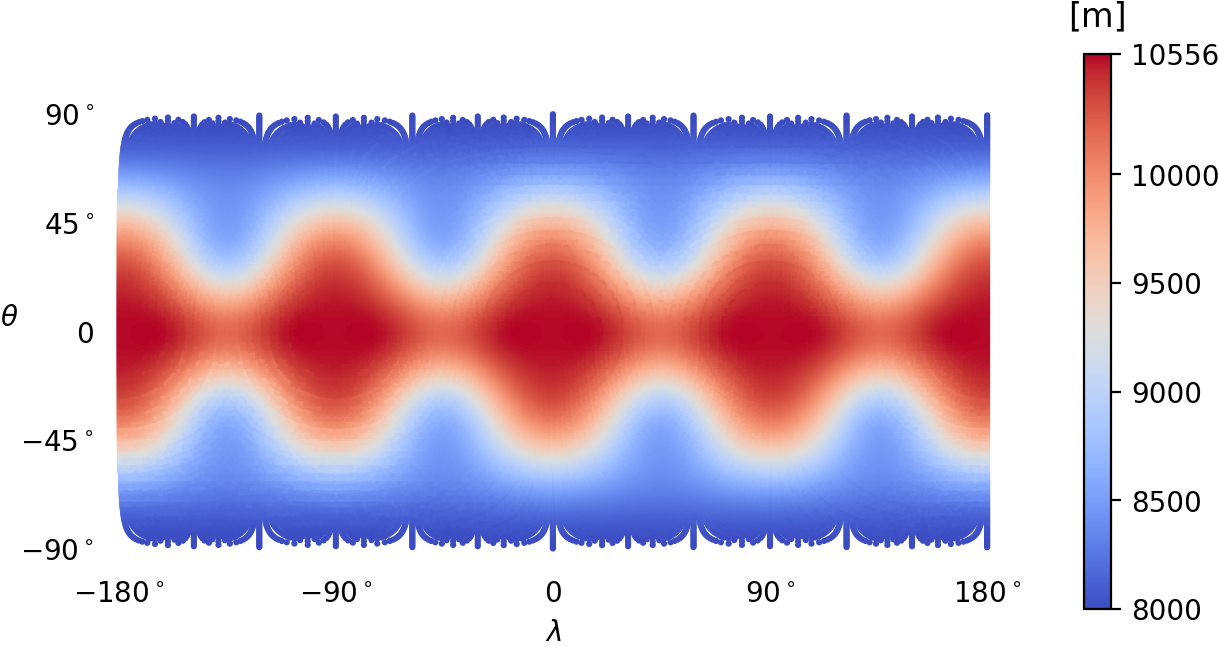}
        \caption{Initial height.}
    \end{subfigure}
    \begin{subfigure}{0.49\textwidth}
        \includegraphics[width=\textwidth]{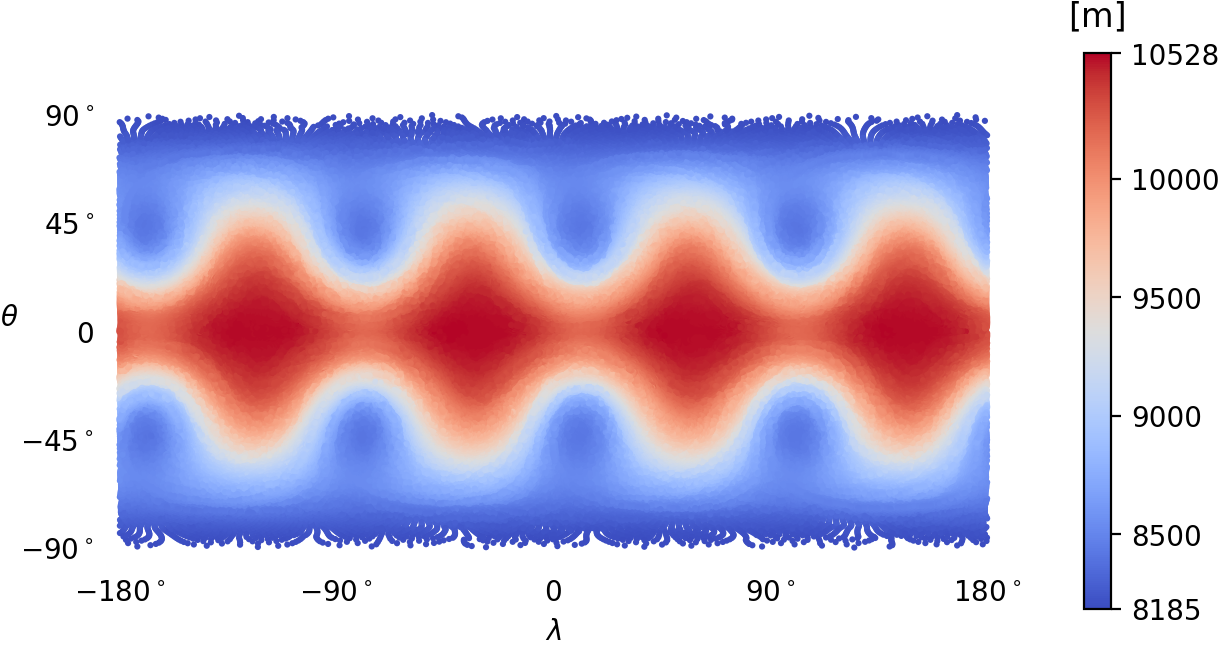}
        \caption{Height at 5 days.}
    \end{subfigure}
    \begin{subfigure}{0.49\textwidth}
        \includegraphics[width=\textwidth]{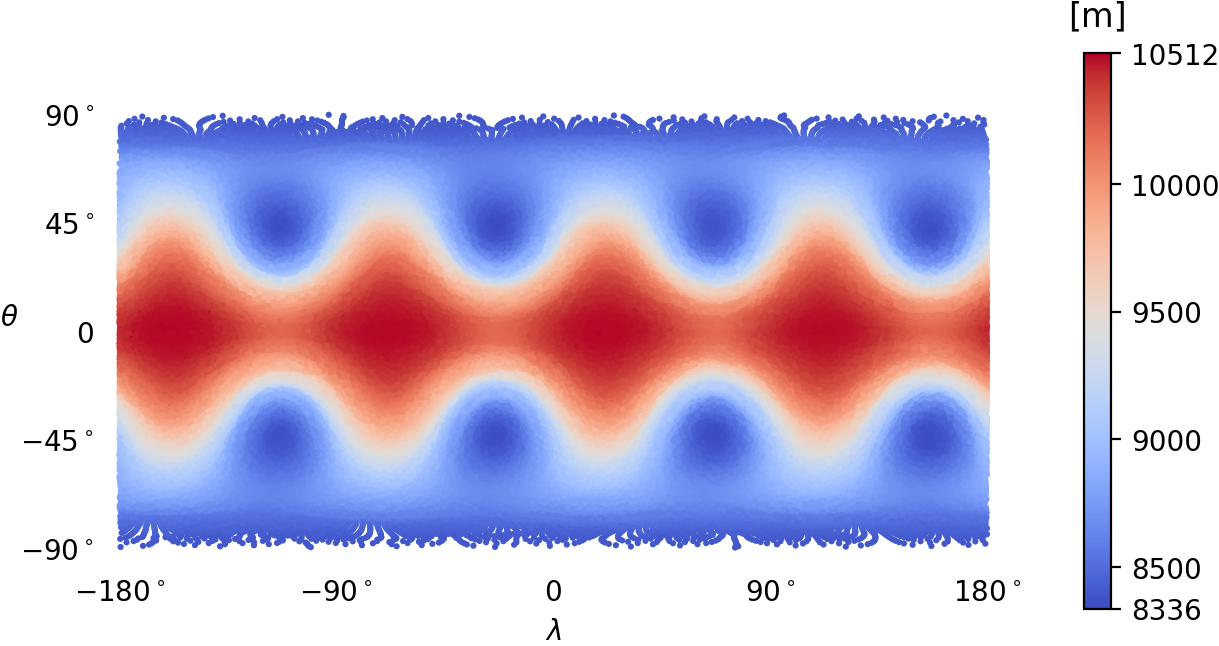}
        \caption{Height at 10 days.}
    \end{subfigure}
    \begin{subfigure}{0.49\textwidth}
        \includegraphics[width=\textwidth]{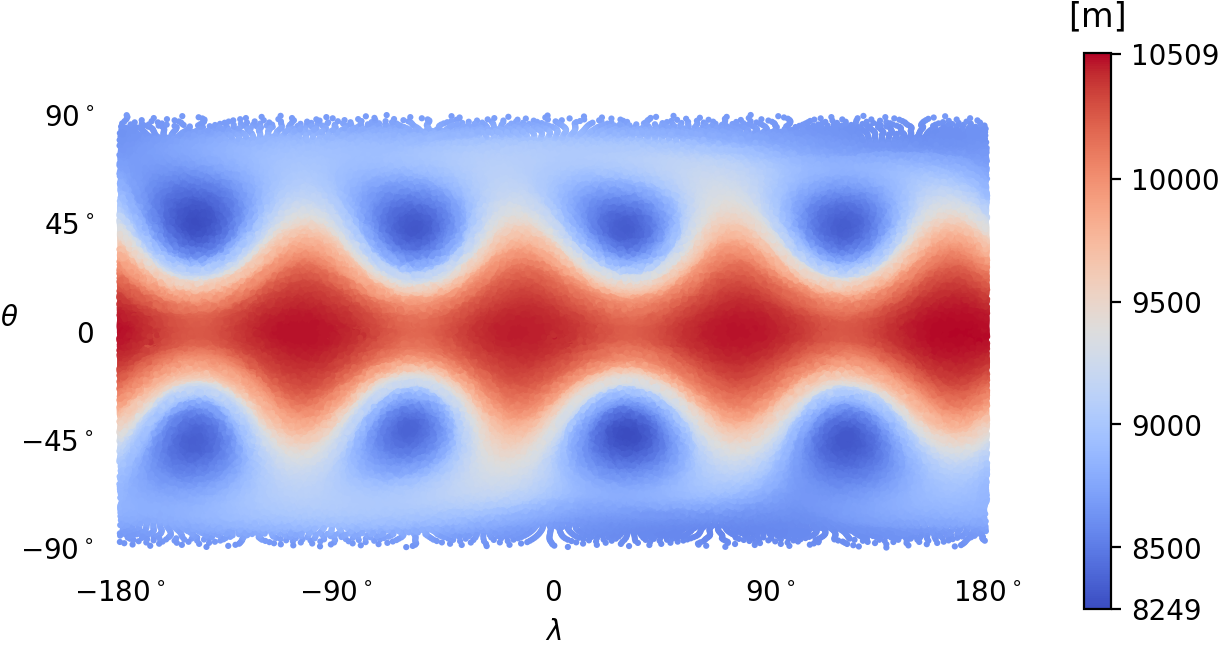}
        \caption{Height at 15 days.}
    \end{subfigure}
    \caption{Height field for Rossby-Haurwitz Wave test at 0, 5, 10 and 15 days computed with 163, 842 particles and a time step of 30 seconds.}
    \label{fig:w6-h-solution}
\end{figure}

Finally, we use this case to comment on the performance of all the components of the solver.
Earlier sections evaluated the performance of the Voronoi diagram construction, which we now situate within the time for a complete time step.
Fig.~\ref{fig:w6-timing} shows the timing of a complete time step with 163,842 particles (for all time steps over the 15-day simulation) along with the percentage consumed by the Voronoi diagram calculations and the linear solver time, which is needed to solve both Eq.~\ref{eq:newton-update} for the Newton update and Eq.~\ref{eq:semi-implicit-height} for the semi-implicit height update.
Each time step took about 2.4 seconds on an AMD EPYC 7513 with 32 cores.
Roughly 12\% of that time was used to compute all the Voronoi diagrams (on average, 3 per time step) needed to solve the semi-discrete optimal transport problem.
The linear solver (to solve both systems of equations) was the most costly component, consuming on average 70\% of each time step.
The remaining time was used to update the particle heights and velocities (which is not parallelized for now) and to calculate the differential operators.
\begin{figure}
    \centering
    \begin{subfigure}[t]{0.49\textwidth}
        \includegraphics[width=\textwidth]{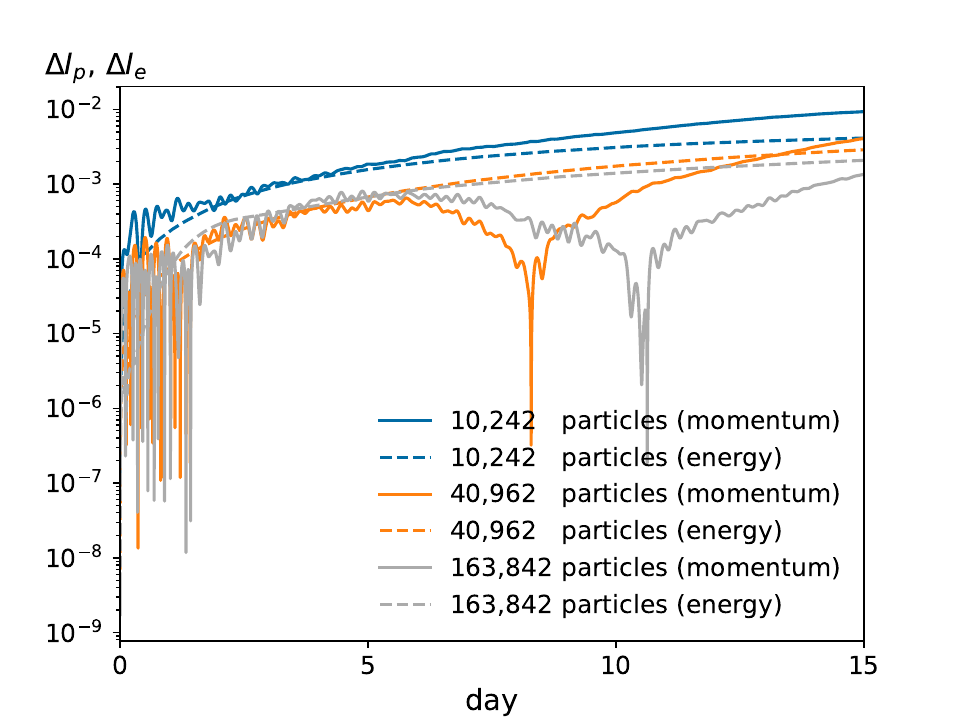}
        \caption{Conservation of momentum (dashed) and energy (solid).}
        \label{fig:w6-conservation}
    \end{subfigure}
    \begin{subfigure}[t]{0.49\textwidth}
        \includegraphics[width=\textwidth]{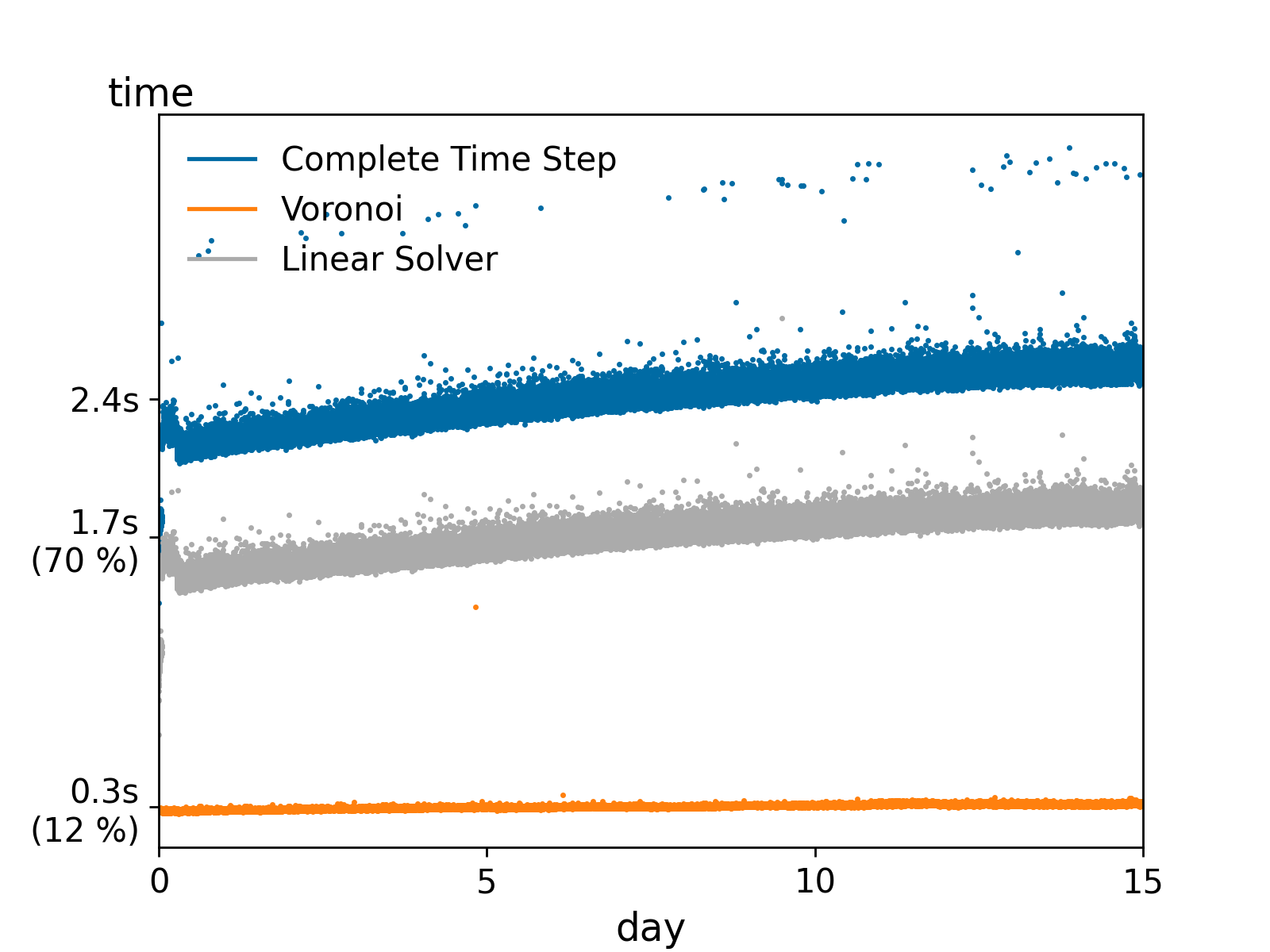}
        \caption{Breakdown of the time each component of a time step took using an AMD EPYC 7513 (32 cores).}
        \label{fig:w6-timing}
    \end{subfigure}
    \caption{Conservation properties over 15 days of simulation for the Rossby-Haurwitz Wave test.
        This case was also used to analyze the timing breakdown of each time step.
    }
\end{figure}
\section{Conclusions and Future Work}
This paper makes progress towards determining whether Lagrangian simulations of geophysical fluids can outperform Eulerian simulations.
We introduced a new Lagrangian method for solving the shallow water equations on the sphere which uses weighted Voronoi cells to represent the particles.

We started by extending a halfspace clipping algorithm to compute spherical Voronoi cells in which Voronoi vertices are computed using the intersection between a sphere and the line of intersection between two Voronoi bisectors.
The spherical Voronoi diagram calculation is faster than existing methods and is able to calculate a Voronoi diagram of 100 million sites in under 2 minutes.
These Voronoi diagrams then lent to the construction of differential operators that were used to discretize the inviscid shallow water equations.
Integration in time was performed with a semi-implicit scheme, which was robust for the cases studied in this work.
Our method also does not require artificial viscosity.
Overall, the method exhibited the same level of momentum and energy conservation as the most recent Lagrangian method for simulating the shallow water equations~\cite{Capecelatro_2018}, but our results appear more consistent with those of Eulerian solvers. 
For the second case of Williamson et al.~\cite{Williamson_1992}, the error in the height and velocity converged with a rate of 1 as the particle sizes decreased.
Furthermore, the method was in agreement with an Eulerian shallow water equation solver for the case of zonal flow over an isolated mountain.

The results are promising, and there are several avenues for future work.
For the cases studied here, the method was robust, and requires no parameters other than number of particles and a desired time step.
Accuracy may be improved with higher-order approximations of the differential operators in the discretization.
It may also be worthwhile to investigate whether the moving finite volume discretization of Springel~\cite{Springel_2010} would improve momentum and energy conservation, in addition to implementing a vorticity-divergence formulation.
Performance may be improved by using a GPU-based approach to compute Voronoi cells~\cite{Ray_2018, Basselin_2021, Templeton_2022} as well as libraries such as \texttt{AMGCL}~\cite{Demidov_2020} to solve the linear systems.
Finally, we hope to extend this method to simulate the oceans by keeping large, fixed particles in land areas with moving particles in the oceans as in Fig.~\ref{fig:voronoi-oceans}.

Further development of these ideas are welcome by collaborating at\\ \texttt{https://github.com/philipclaude/vortex}.
\begin{figure}
    \centering
    \includegraphics[width=0.99\textwidth]{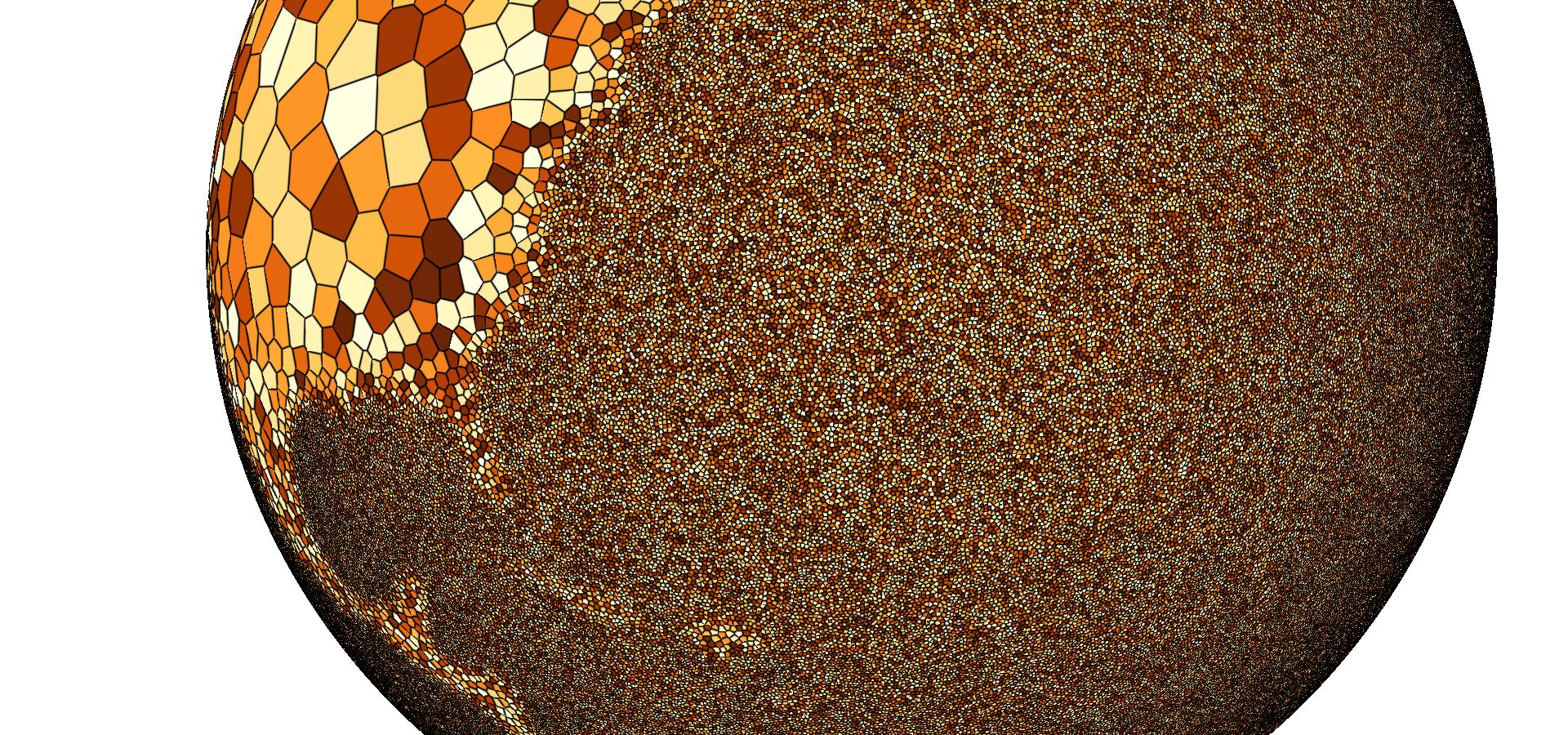}
    \caption{Preliminary work at setting up a Voronoi diagram with higher resolution in the oceans.
        This Voronoi diagram contains 1 million sites and contains less than 2\% of the sites on land~\cite{Pouler_2024}.
    }
    \label{fig:voronoi-oceans}
\end{figure}

\section*{Acknowledgments}
This material is based upon work supported by the National Science Foundation under Grant No. 1827373.

%%%%%%%%%%%%%%%%%%%%%%%%%%%%%%%%%%%%%%%%%%%%%%%
%
% APPENDICES
%
%%%%%%%%%%%%%%%%%%%%%%%%%%%%%%%%%%%%%%%%%%%%%%%
%
\appendix
\section{Lifting sites to compute power cells}
\label{app:lifting}
Aurenhammer~\cite{Aurenhammer_1987} showed that power cells can be computed from the Voronoi cells of sites lifted to $\mathbb{R}^4$, and L\'evy~\cite{Levy_2015} provided an explicit expression for the lifted coordinate. This connection is elaborated upon below and adapted to the case of a sphere. As mentioned in Remark~\ref{rem:voronoi-equivalence}, either approach works, and the following results are simply meant to provide the theoretical basis for this connection in case the reader wishes to implement either approach. One particular advantage of the lifting approach is that standard nearest neighbor approaches (e.g. using a kd-tree) can be used.
\begin{proposition}
    \label{thm:lifting}
    Let $L_0\,\colon \mathbb{R}^3 \to \mathbb{R}^4$ be an embedding in which the lifted (fourth) coordinate is zero:
    \begin{align*}
        L_0\left(\vec{r}\right) = \mathbf{M}\vec{r},\quad \forall\,\vec{r} \in \mathbb{R}^3,\,\,\mathrm{with}\,\, \mathbf{M} =  \left[\begin{array}{ccc}
        1 & 0 & 0 \\
        0 & 1 & 0 \\
        0 & 0 & 1 \\
        0 & 0 & 0
        \end{array}\right].
    \end{align*}
    Note that $L_0^{-1}\,\colon\mathbb{R}^4 \to \mathbb{R}^3$ projects a point in $\mathbb{R}^4$ back to $\mathbb{R}^3$ by discarding the fourth coordinate, and $\mathbf{M}^T\mathbf{M}\vec{r} = \vec{r}$.
    With $\vec{r}_{\ell} = \mathbf{M}\vec{r}$, define
    $\widehat{\mathbb{S}}^2 = L_0(\mathbb{S}^2) = \{ \vec{r}_\ell\,\colon\,\lVert \vec{r}_\ell\rVert = a\}$ as the sphere $\mathbb{S}^2$ with radius $a$ lifted to $\mathbb{R}^4$. A spherical power diagram of the sites $\{\vec{p}_i = (x_i, y_i, z_i) \in \mathbb{R}^3\}$ with weights $\{w_i \in \mathbb{R}\}$ is the spherical Voronoi diagram of the sites $\left\{ \vec{q}_i = (x_i, y_i, z_i, \sqrt{w_{\max} - w_i}) \right\}$ where $w_{\max} \ge \max\{w_i\}$, projected back to $\mathbb{R}^3$ by $L_0^{-1}$. In other words,
    \begin{align*}
        \mathbb{P}_i = L_0^{-1}(\widehat{\mathbb{V}}_i),\,\,\mathrm{where}\,\,\widehat{\mathbb{V}}_i = \{ \vec{r}_\ell \in \widehat{\mathbb{S}}^2\,\colon\,\lVert\vec{q}_i - \vec{r}_\ell\rVert < \lVert \vec{q}_j - \vec{r}_\ell\rVert,\ \forall j \neq i\}.
    \end{align*}%
\end{proposition}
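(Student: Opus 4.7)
The plan is to reduce the claim to a direct algebraic identity between the squared distance from a lifted query point to a lifted site and the shifted squared distance that defines a power cell. The essential observation is that because the fourth coordinate of $\vec{r}_\ell = L_0(\vec{r})$ vanishes by construction, the only contribution to $\lVert \vec{q}_i - \vec{r}_\ell\rVert^2$ from the extra coordinate of $\vec{q}_i$ is $(\sqrt{w_{\max} - w_i})^2 = w_{\max} - w_i$. Everything else should drop out cleanly.

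First I would fix an arbitrary $\vec{r} \in \mathbb{S}^2$ and expand
\begin{align*}
    \lVert\vec{q}_i - \vec{r}_\ell\rVert^2 = \lVert\vec{p}_i - \vec{r}\rVert^2 + (w_{\max} - w_i),
\end{align*}
using the definitions of $\vec{q}_i$ and $L_0$. The hypothesis $w_{\max} \ge \max\{w_i\}$ is used here only to ensure the lifted fourth coordinate is a real number. Next I would subtract the analogous identity with index $j$; since the common $w_{\max}$ cancels, the strict inequality
\begin{align*}
    \lVert\vec{q}_i - \vec{r}_\ell\rVert^2 < \lVert\vec{q}_j - \vec{r}_\ell\rVert^2
\end{align*}
is equivalent to
\begin{align*}
    \lVert\vec{p}_i - \vec{r}\rVert^2 - w_i < \lVert\vec{p}_j - \vec{r}\rVert^2 - w_j.
\end{align*}
This is exactly the defining inequality for $\mathbb{P}_i$ in Eq.~\eqref{eq:power-cell}.

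With the pointwise equivalence in hand, I would conclude by showing set equality. For the inclusion $\mathbb{P}_i \subseteq L_0^{-1}(\widehat{\mathbb{V}}_i)$, take $\vec{r} \in \mathbb{P}_i$; the inequality above holds for all $j \neq i$, hence $\vec{r}_\ell \in \widehat{\mathbb{V}}_i$ and $\vec{r} = L_0^{-1}(\vec{r}_\ell)$. For the reverse inclusion, take $\vec{s} \in L_0^{-1}(\widehat{\mathbb{V}}_i)$; then $\vec{s} = L_0^{-1}(\vec{s}_\ell)$ with $\vec{s}_\ell \in \widehat{\mathbb{V}}_i$, and because $\vec{s}_\ell$ has zero fourth coordinate it equals $L_0(\vec{s})$, so the same identity transports the Voronoi inequality back to the power inequality. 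Finally, $\vec{s} \in \mathbb{S}^2$ since $\widehat{\mathbb{S}}^2 = L_0(\mathbb{S}^2)$ and $L_0$ is an isometry between $\mathbb{S}^2$ and $\widehat{\mathbb{S}}^2$.

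There is no real obstacle; the only thing to be a bit careful about is making sure the domains match, namely that restricting the $\mathbb{R}^4$ Voronoi diagram to $\widehat{\mathbb{S}}^2$ and then projecting by $L_0^{-1}$ lands inside $\mathbb{S}^2$ (which follows from $\mathbf{M}^T\mathbf{M} = \mathbf{I}_3$ so $L_0$ is a norm-preserving embedding), and that the constant $w_{\max}$ truly cancels in every pairwise comparison so that the choice of $w_{\max}$ is immaterial beyond the reality-of-square-root condition.
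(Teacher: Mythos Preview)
Your proposal is correct and follows essentially the same argument as the paper: both reduce the claim to the algebraic identity $\lVert\vec{q}_i - \vec{r}_\ell\rVert^2 = \lVert\vec{p}_i - \vec{r}\rVert^2 + w_{\max} - w_i$, then observe that $w_{\max}$ cancels in the pairwise comparison so the Voronoi inequality in $\mathbb{R}^4$ coincides with the power inequality in $\mathbb{R}^3$. Your version is a little more explicit about the two set inclusions and the domain matching, but the core reasoning is identical.
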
%
\begin{proof}
    Note that $L_0^{-1}(\widehat{\mathbb{S}}^2) = \mathbb{S}^2$. We need to show that the expression above for $\mathbb{P}_i$ matches the original definition of a spherical power cell (Eq.~\ref{eq:power-cell}). By substituting $\vec{r}_\ell = \mathbf{M}\vec{r}$, as well as the definition of $\vec{q}_i = (x_i, y_i, z_i, \sqrt{w_{\max} - w_i})$, observe that $\lVert \vec{q}_i - \mathbf{M}\vec{r}\rVert^2 = \lVert\vec{p}_i - \vec{r}\rVert^2 + w_{\max} - w_i$. Similarly, $\lVert \vec{q}_j - \mathbf{M}\vec{r}\rVert^2 = \lVert\vec{p}_j - \vec{r}\rVert^2 + w_{\max} - w_j$. Therefore,
    \begin{align*}
        \widehat{\mathbb{V}}_i = \{ \mathbf{M}\vec{r} \in \widehat{\mathbb{S}}^2\,\colon\,\lVert\vec{p}_i - \vec{r}\rVert^2 + w_{\max} - w_i  < \lVert\vec{p}_j - \vec{r}\rVert^2 + w_{\max} - w_j,\ \forall j \neq i\}.
    \end{align*}
    Applying $L_0^{-1}$ to the domain (i.e. premultiplying by $\mathbf{M}^T$) yields:
    \begin{align*}
        L_0^{-1}(\widehat{\mathbb{V}}_i) = \{ \vec{r} \in \mathbb{S}^2\,\colon\,\lVert\vec{p}_i - \vec{r}\rVert^2 - w_i  < \lVert\vec{p}_j - \vec{r}\rVert^2 - w_j,\ \forall j \neq i\} = \mathbb{P}_i,
    \end{align*}%
\end{proof}%
We can further connect the halfspace-based definition of Eq.~\ref{eq:power-cell-halfspaces} to a halfspace-based definition with the lifted sites. Specifically, the halfspace-based definition of Voronoi cells in $\mathbb{R}^4$ is:
\begin{align}
    \label{eq:voronoi-cell-halfspaces-lifted}
    \widehat{\mathbb{V}}_i = \bigcap\limits_{j \neq i}\ \widehat{\mathcal{H}}^+_{ij}, \quad\widehat{\mathcal{H}}_{ij}^+ = \{ \vec{r}_\ell \in \widehat{\mathbb{S}}^2\,\colon\,(\vec{r}_\ell - \vec{b}_{ij,\ell})\cdot \vec{n}_{ij,\ell} > 0 \}.
\end{align}
where $\vec{b}_{ij,\ell} = \frac{1}{2}(\vec{q}_i + \vec{q}_j) \in \mathbb{R}^4$ is a point on the bisector and $\vec{n}_{ij,\ell} = \vec{q}_i - \vec{q}_j \in \mathbb{R}^4$ is the normal to the bisector (in the lifted space), which points into $\widehat{\mathbb{V}}_i$.
\begin{proposition}
    \label{thm:halfspaces}
    The halfspace-based definition of a Voronoi cell $\widehat{\mathbb{V}}_i$ in Eq.~\ref{eq:voronoi-cell-halfspaces-lifted} is equivalent to the halfspace-based definition of a power cell $\mathbb{P}_i$ in Eq.~\ref{eq:power-cell-halfspaces}.
\end{proposition}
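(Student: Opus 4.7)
The plan is to show that the signed quantity governing the lifted halfspace condition is algebraically identical to the signed quantity governing the power halfspace condition whenever $\vec{r}\in\mathbb{S}^2$ and $\vec{r}_\ell=\mathbf{M}\vec{r}$. Since both halfspaces are characterized by the sign (in fact, the value) of this scalar, a pointwise equality of the two expressions yields the claim, and the intersections over $j\neq i$ then give $\widehat{\mathbb{V}}_i$ and $\mathbb{P}_i$ as the same subset of $\mathbb{S}^2$ up to the identification by $L_0$.

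First I would expand $\vec{n}_{ij,\ell}=\vec{q}_i-\vec{q}_j$ and $\vec{b}_{ij,\ell}=\tfrac{1}{2}(\vec{q}_i+\vec{q}_j)$ in coordinates, writing $s_i=\sqrt{w_{\max}-w_i}$. Because $\vec{r}_\ell$ has zero fourth coordinate, the four-dimensional inner product splits into a three-dimensional piece and a scalar fourth-coordinate piece:
\begin{equation*}
(\vec{r}_\ell-\vec{b}_{ij,\ell})\cdot\vec{n}_{ij,\ell}
= \bigl(\vec{r}-\tfrac{1}{2}(\vec{p}_i+\vec{p}_j)\bigr)\cdot(\vec{p}_i-\vec{p}_j)\;-\;\tfrac{1}{2}(s_i^2-s_j^2).
\end{equation*}
Using $s_i^2-s_j^2=w_j-w_i$ eliminates $w_{\max}$ and leaves
$\bigl(\vec{r}-\tfrac{1}{2}(\vec{p}_i+\vec{p}_j)\bigr)\cdot(\vec{p}_i-\vec{p}_j)+\tfrac{1}{2}(w_i-w_j)$.
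Next I would compute $(\vec{r}-\vec{b}_{ij})\cdot\vec{n}_{ij}$ from Eq.~\ref{eq:power-cell-halfspaces}: the corrective shift $\tfrac{1}{2}(w_i-w_j)(\vec{p}_j-\vec{p}_i)/\lVert\vec{p}_j-\vec{p}_i\rVert^2$ is a scalar multiple of $-\vec{n}_{ij}$, so pairing with $\vec{n}_{ij}=\vec{p}_i-\vec{p}_j$ contributes $+\tfrac{1}{2}(w_i-w_j)$, reproducing exactly the expression obtained from the lifted dot product. Therefore the two inequalities are identical for every $\vec{r}\in\mathbb{S}^2$ and every $j\neq i$, which yields $L_0^{-1}(\widehat{\mathcal{H}}_{ij}^+)=\mathcal{H}_{ij}^+$ and hence $L_0^{-1}(\widehat{\mathbb{V}}_i)=\mathbb{P}_i$.

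The main obstacle is essentially cosmetic rather than mathematical: one must verify that $w_{\max}$ drops out, so that neither the orientation nor the magnitude of the signed distance depends on the particular value of $w_{\max}$ used to define the lift. This is why the statement only requires $w_{\max}\ge\max\{w_i\}$ rather than a specific value, and it is also what makes the equivalence with Proposition~\ref{thm:lifting} immediate: the halfspace inequality derived above rearranges to $\lVert\vec{p}_i-\vec{r}\rVert^2-w_i<\lVert\vec{p}_j-\vec{r}\rVert^2-w_j$, matching Eq.~\ref{eq:power-cell}. Aside from this bookkeeping, the argument is a short direct calculation with no further subtleties.
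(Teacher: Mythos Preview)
Your proposal is correct and follows essentially the same approach as the paper: both arguments show directly that the two signed-distance predicates $(\vec{r}_\ell-\vec{b}_{ij,\ell})\cdot\vec{n}_{ij,\ell}$ and $(\vec{r}-\vec{b}_{ij})\cdot\vec{n}_{ij}$ are algebraically equal, exploiting that the fourth coordinate of $\vec{r}_\ell$ vanishes and that $w_{\max}$ cancels. The only cosmetic difference is that the paper reduces both sides to $\vec{r}\cdot\vec{n}_{ij}-\tfrac{1}{2}(\lVert\vec{p}_i\rVert^2-\lVert\vec{p}_j\rVert^2-w_i+w_j)$ via the norm identities, whereas you keep the midpoint form $(\vec{r}-\tfrac{1}{2}(\vec{p}_i+\vec{p}_j))\cdot(\vec{p}_i-\vec{p}_j)+\tfrac{1}{2}(w_i-w_j)$; these are the same expression.
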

\begin{proof}
    The main task is to show the equivalence between the two predicates in the sets defining the halfspaces $\mathcal{H}_{ij}^+$ (Eq.~\ref{eq:power-cell-halfspaces}) and $\widehat{\mathcal{H}}_{ij}^+$ (Eq.~\ref{eq:voronoi-cell-halfspaces-lifted}). First, note that
    \begin{align*}
        \vec{b}_{ij}\cdot\vec{n}_{ij} = \left(\frac{1}{2}(\vec{p}_i + \vec{p}_j) + \frac{1}{2}(w_i - w_j) \frac{(\vec{p}_j - \vec{p}_i)}{\lVert\vec{p}_j - \vec{p}_i\rVert^2}\right)\cdot(\vec{p}_i - \vec{p}_j) =
        \frac{1}{2}(\lVert\vec{p}_i\rVert^2 + \lVert\vec{p}_j\rVert^2 - w_i + w_j).
    \end{align*}
    Starting with the predicate of Eq.~\ref{eq:voronoi-cell-halfspaces-lifted} and expanding $\vec{n}_{ij,\ell}$ and $\vec{b}_{ij,\ell}$:
    \begin{align*}
        (\vec{r}_\ell - \vec{b}_{ij,\ell})\cdot \vec{n}_{ij,\ell} &= \vec{r}_\ell \cdot (\vec{q}_i - \vec{q}_j) - \frac{1}{2}(\vec{q}_i + \vec{q}_j) \cdot (\vec{q}_i - \vec{q}_j) & \\
        & = \vec{r}\cdot (\vec{p_i} - \vec{p}_j) - \frac{1}{2}(\vec{q}_i + \vec{q}_j) \cdot (\vec{q}_i - \vec{q}_j)\ \mbox{(since the fourth component of $\vec{r}_\ell$ is 0),} \\
        &= \vec{r}\cdot \vec{n} - \frac{1}{2}(\vec{q}_i + \vec{q}_j) \cdot (\vec{q}_i - \vec{q}_j) \ \mbox{(substituting $\vec{n} = \vec{p}_i - \vec{p}_j$),} \\
        &= \vec{r}\cdot \vec{n} - \frac{1}{2}(\lVert\vec{q}_i\rVert^2 - \lVert\vec{q}_j\rVert^2), & \\
        &= \vec{r}\cdot \vec{n} - \frac{1}{2}(\lVert\vec{p}_i\rVert^2 + w_{\max} - w_i - \lVert\vec{p}_j\rVert^2 - w_{\max} + w_j), & \\
        &= \vec{r}\cdot \vec{n} - \frac{1}{2}(\lVert\vec{p}_i\rVert^2 - \lVert\vec{p}_j\rVert^2 - w_i + w_j),
    \end{align*}
    which is the same as $\vec{r}\cdot\vec{n} - \vec{b}_{ij}\cdot\vec{n}_{ij} = (\vec{r} - \vec{b}_{ij})\cdot\vec{n}_{ij}$ used to describe $\mathcal{H}_{ij}^+$.
\end{proof}

\bibliography{references}
\end{document}